\newtheorem{theorem}{Theorem}
\newtheorem{lemma}{Lemma}
\newcommand{\revise}[1]{\textcolor{black}{#1}}
\def\BibTeX{{\rm B\kern-.05em{\sc i\kern-.025em b}\kern-.08em
    T\kern-.1667em\lower.7ex\hbox{E}\kern-.125emX}}
\begin{document}

\title{
Time-Varying Vector Field Compression with Preserved Critical Point Trajectories
}

\author{\IEEEauthorblockN{Mingze Xia}
\IEEEauthorblockA{
\textit{Oregon State University}\\
Corvallis, OR, USA \\
xiami@oregonstate.edu} \\
\IEEEauthorblockN{Bei Wang}
\IEEEauthorblockA{
\textit{University of Utah}\\
Salt Lake City, UT, USA \\
beiwang@sci.utah.edu}
\and
\IEEEauthorblockN{Yuxiao Li}
\IEEEauthorblockA{
\textit{The Ohio State University}\\
Columbus, OH, USA \\
li.14025@osu.edu} \\
\IEEEauthorblockN{Xin Liang\IEEEauthorrefmark{1}\thanks{*Corresponding author: Xin Liang, School of Electrical Engineering and Computer Science, Oregon State University, Corvallis, OR 97331.}}
\IEEEauthorblockA{
\textit{Oregon State University}\\
Corvallis, OR, USA \\
lianxin@oregonstate.edu}
\and
\IEEEauthorblockN{Pu Jiao}
\IEEEauthorblockA{
\textit{University of Kentucky}\\
Lexington, KY, USA \\
pujiao@uky.edu} \\
\IEEEauthorblockN{Hanqi Guo}
\IEEEauthorblockA{\textit{The Ohio State University} \\
Columbus, OH, USA \\
guo.2154@osu.edu}
}

\maketitle

\begin{abstract}
Scientific simulations and observations are generating massive volumes of time-varying vector field data, posing significant challenges for long-term storage and data transmission. Lossy compression is widely regarded as a promising approach for reducing data volume, as lossless methods typically achieve only modest compression ratios and therefore provide limited reduction. However, directly applying existing lossy compression techniques to time-varying vector fields can introduce undesirable distortions in critical-point trajectories, which encode essential structural properties of the underlying field. 
In this work, we present an efficient lossy compression framework that exactly preserves all critical-point trajectories in time-varying vector fields. Our contributions are threefold. First, we extend the theory of critical-point preservation from the spatial domain to space-time and develop a corresponding compression framework to guarantee trajectory preservation. Second, we introduce a semi-Lagrangian predictor to more effectively exploit spatiotemporal correlations in advection-dominated regions, and integrate it with the classical Lorenzo predictor to further improve compression efficiency. Third, we evaluate the proposed approach against state-of-the-art lossy and lossless compressors on four real-world scientific datasets. Experimental results show that our method achieves compression ratios of up to $124.48\times$ while effectively preserving all critical-point trajectories. This compression ratio is up to $56.07\times$ higher than the best-performing lossless compressors. In contrast, none of the existing lossy compressors can preserve all critical-point trajectories at comparable compression ratios. 
\end{abstract}

\begin{IEEEkeywords}
Scientific Data, Spatiotemporal Data Compression, Feature Preservation, Critical-Point Trajectories
\end{IEEEkeywords}

\section{Introduction}
\label{sec:introduction}

Large-scale scientific simulations and observations produce massive amounts of data that overwhelm data storage and transmission systems. 
For instance, direct numerical simulation (DNS) of forced isotropic turbulence on the Frontier supercomputer~\cite{frontier} generates 0.5 PB of data in a single run~\cite{yeung2025small}, and the ERA5 Atmospheric Reanalysis, which records climate data from 1979 to the present, has already accumulated to 5 PB~\cite{hersbach2020era5}. 
Such large data sizes not only make it difficult for producers to store the data but also hinder researchers from accessing it remotely.

Time-varying vector fields, such as flow velocity in DNS and atmospheric wind in ERA5, constitute a significant component of scientific data.
These data describe how vectors change over both space and time, and are widely used to compute derived quantities and extract features of interest~\cite{gunther2017generic, friederici2015finite,fritschi2019visualizing,rimensberger2019visualization}. 
As such, enabling timely access to time-varying vector fields is a pressing need across many scientific and engineering disciplines.
This often requires efficient data compression due to unprecedented data sizes. 

While lossless compression is the most general way to reduce data size, it is inefficient for floating-point scientific data because of the random mantissas.
As demonstrated in prior work~\cite{son2014data, lindstrom2017error}, generic lossless compressors, such as GZIP~\cite{gzip} and ZSTD~\cite{zstd}, achieve modest compression ratios of less than 2$\times$ on most scientific datasets. 
Such limited compression is insufficient to address the growing challenges of scientific data storage and transmission. 
Alternatively, error-controlled lossy compression~\cite{zhao2021optimizing, liang2022sz3, zfp, ainsworth2019multilevel, liang2021mgard+, li2023sperr, liu2024high} has emerged as a promising approach, as it achieves substantially higher compression ratios while enforcing user-specified error constraints.

While error-controlled lossy compressors provide error guarantees over the raw data, most of them are agnostic to topological features.
Topological features usually contain key information for novel scientific discoveries, and distortions in them may lead to wrong interpretations of the data and, thus, misleading discoveries.
Critical point trajectories are important topological features of time-varying vector fields. They are time-parameterized curves for which the vector value equals zero at every point, and they are widely used for tracking cyclones in climatology~\cite{hewson2010objective}, identifying eddies in oceanology~\cite{nencioli2010vector}, and tracing vortices in computational fluid dynamics~\cite{BERSON200961,CoresofSwirling}.
Given their role as primary analytical primitives, distortions in critical-point trajectories can lead to systematic errors in downstream analytics.
For example, corrupted trajectories may result in incorrect detection of event initiation or termination, inaccurate estimation of feature lifetimes and movement patterns, and inconsistent query results when tracking features across time~\cite{TRICOCHE2002249,1432684,fff}.
Such errors can further propagate into misleading physical interpretations and flawed predictive models~\cite{neu2013imilast, woollings2018blocking, bai2020time,haller2005objective}.
While prior works~\cite{liang2020toward, liang2022toward, xia2024preserving} enable the preservation of critical points during error-controlled lossy compression of single-snapshot data, they may still distort critical-point trajectories, as the spatiotemporal connectivity of critical points is not explicitly considered.

In this work, we present the first highly effective lossy compression framework that exactly preserves critical-point trajectories in time-varying vector fields. Specifically, we extend the theory in~\cite{xia2024preserving} from preserving critical points in the spatial domain to preserving critical-point trajectories in space-time, and develop a lossy compression framework that guarantees exact trajectory preservation. 
To further improve compression efficiency, we propose a semi-Lagrangian predictor that better exploits spatiotemporal correlation in advection-dominated regions, and combine it with the traditional Lorenzo predictor~\cite{ibarria2003out} for enhanced data decorrelation. 
To this end, we evaluate our framework on five real-world scientific datasets with six baselines. 

In summary, our contributions are as follows.

\begin{itemize}
    \item We design and develop the first lossy compression framework that preserves critical-point trajectories in time-varying vector fields, by extending the theory in~\cite{xia2024preserving} to account for critical-point connections in space-time.  
    \item We propose a semi-Lagrangian predictor that leverages particle advection theory to exploit spatiotemporal correlation in time-varying vector fields. We further leverage a mixture of predictors that combines the semi-Lagrangian predictor and the traditional Lorenzo predictor for improved compression efficiency. 
    \item We evaluate the proposed method and compare it with three widely used lossless compressors and three leading lossy compressors using five scientific datasets. 
    Experiments demonstrate that our method faithfully preserves all critical-point trajectories while delivering up to $56\times$ higher compression ratios than lossless compressors. In contrast, existing lossy compressors produce noticeable distortions in critical-point trajectories at similar or lower compression ratios. 
\end{itemize}

The remainder of this paper is organized as follows. Section~\ref{sec:related} reviews related work. Section~\ref{sec:background} provides background on time-varying vector fields, critical-point trajectories, and existing critical-point-preserving compression algorithms. Section~\ref{sec:formulation} formulates the research problem. Section~\ref{sec:method} presents the design of our compression framework for preserving critical-point trajectories. Section~\ref{sec:mop} introduces the semi-Lagrangian predictor and describes its integration with the Lorenzo predictor. Section~\ref{sec:evaluation} reports and analyzes the experimental results. Finally, Section~\ref{sec:conclusion} concludes the paper and outlines directions for future work.

\section{Related Works}
\label{sec:related}

In this section, we review the literature on compression for scientific data and topology-aware compression.

\subsection{Lossy Compression for Scientific Data}
Lossless compression has been widely adopted in various domains such as database systems~\cite{hu2025icde,liu2025bittuner,lixi2024icde,tan2024dcc,tan2024ts}. Traditional lossless compression techniques, such as LZ-family compressors including gzip~\cite{gzip}, Zstd~\cite{zstd}, and LZ4~\cite{lz4}, floating-point-oriented tools like FPZIP~\cite{fpzip}, typically achieve a limited compression ratio on scientific data\cite{lindstrom2017error}. This limitation has motivated the community to adopt error-bounded lossy compression to achieve a higher compression ratio while maintaining user-specified error metrics. Error-bounded lossy compressors usually follow three  steps: decorrelation, quantization, and encoding. Depending on the decorrelation method, we can categorize these compressors into prediction-based compressors and transform-based compressors.  The SZ compressor and its derived variants~\cite{sz16, sz17, sz18, zhao2021optimizing,liu2025hpdc,wuxuan2025ipdps} represent a class of prediction-based compressors. They use predictors such as Lorenzo and regression for decorrelation and the decorrelated data are further quantized and encoded. 
Transform-based compressors leverage specific transforms for data decorrelation, and then perform quantization and encoding in the transformed domains. 
ZFP~\cite{lindstrom2014fixed} is a typical transform-based compressor that divides data into non-overlapped data blocks for independent compression. Other notable transform-based compressors include TTHRESH~\cite{ballester2019tthresh}, SPERR~\cite{li2023lossy}, and FAZ~\cite{liu2023faz}.  
MGARD~\cite{ainsworth2018multilevel, ainsworth2019multilevel, ainsworth2019qoi} is another lossy compressor that occupies a middle ground between prediction-based and transform-based approaches. It performs data decorrelation via multilevel multilinear interpolation with an $L^2$ projection, followed by level-wise quantization. 

Beyond engineering practice, the community has also studied error propagation and numerical stability, including error analyses and statistical modeling of compression error distribution~\cite{lindstrom2017error}. Together, these results support replacing lossless compression with controlled-lossy approaches when application-level quantities of interest (QoIs)~\cite{pu2022qoi,liu2024qoi,wuxuan2025sc} remain within tolerance. In short, given the limited compression ratios of lossless methods on floating-point data, error-bounded lossy compression has become the mainstream solution for scientific data reduction and lays the groundwork for higher-level, structure-aware guarantees. This motivates topology-aware compressors that explicitly preserve structural invariants.

\subsection{Topology-Aware Compression}
Although error-bounded lossy compressors can enforce constraints such as $L^\infty$ or $L^2$ bounds, many scientific analyses are primarily concerned with the topological structures of scalar or vector fields~\cite{guoxi2025gale,guoxi2023atask}. Examples include the number and locations of critical points, the connectivity of level sets (i.e., contours), and the stability of contour trees, merge trees, and Morse-Smale complexes. Purely amplitude-based error bounds can create or annihilate critical points or alter connectivity, undermining feature tracking and subsequent analysis.

For the scalar field data, a compressor building on persistent homology has been proposed~\cite{soler2018topologically}. This approach guarantees that high-persistence features are preserved exactly under a user-specified persistence threshold, while low-persistence noise is simplified and encoded. Methods such as TopoSZ~\cite{toposz} incorporate topological invariants directly into the prediction-quantization pipeline of SZ~\cite{sz-framework}, preserving the contour tree of scalar fields while meeting pointwise error bounds. An generalization of TopoSZ~\cite{nathan2025pvis} abstracts contour-tree preservation into a standalone layer that can be integrated with arbitrary lossy compressors. MSz and its derived variants~\cite{li2024msz,multitierMSZ} are capable of preserving the Morse-Smale complex structure while simultaneously leveraging parallelization to significantly reduce compression time. A complementary line of work~\cite{GeneralizedTopo} performs local or global topological simplification with $L^\infty$ control. Although this method is not compression algorithms per se, it can be integrated with existing compressor frameworks to preserve the topological structure of the data.

The need for topology-aware constraints naturally extends from scalar fields to vector and tensor fields. For example, the cpSZ framework~\cite{liang2022toward} is capable of preserving critical points during lossy compression, while cpSZ-SoS~\cite{xia2024preserving} further extends this framework into a parallelized compressor suitable for large-scale datasets. At the same time, it replaces the original numerical scheme with the Simulation of Simplicity (SoS) method for critical-point computation, thereby avoiding numerical instabilities that can occur in extreme cases. Building upon these foundations, subsequent work~\cite{xia2025tspsz} further extends such preservation to separatrix structures. More recently, TFZ~\cite{TFZ} introduces a topology-aware lossy compression strategy for 2D second-order tensor fields, illustrating that topology-aware compression is evolving from scalar fields toward higher-dimensional and more complex data types.

\section{background and Preliminaries}
\label{sec:background}

In this section, we present the necessary background and preliminaries for our work, including the notation, fundamental concepts, and techniques that underpin the proposed method.

\subsection{Notation and Assumptions}
We assume the target vector field is piecewise linear, which is a common practice in the community~\cite{nielson1991asymptotic, de2008computational, lorensen1987marching}. Following this assumption, the original domain is decomposed into  simplices, each of which can be represented by a linear function. 
Formally, let $D\subset\mathbb{R}^2$ be a polygonal domain endowed with a time-invariant simplicial mesh $\mathcal{M}=(\mathcal{V},\mathcal{E},\mathcal{F})$, where $\mathcal{V}$, $\mathcal{E}$, and $\mathcal{F}$ denote the sets of vertices, edges, and faces, respectively. The vertex positions are fixed in space across discrete times $\mathcal{T}=\{t_0,\dots,t_T\}$. A time-varying 2D vector field is given by
\[
\mathbf{V}(x,y,t)=(u(x,y,t),\,v(x,y,t))\in\mathbb{R}^2.
\]
We then use $\mathcal{V}_k^i=(u_k^i,v_k^i)$ to denote the vector value at vertex $i\in\mathcal{V}$ and time $t_k$.

\subsection{Critical Point}
A critical point is defined as a location where the vector field vanishes. Under the piecewise-linear assumption over a 2D simplicial mesh, a critical point can be computed by solving a linear system expressed in barycentric coordinates:
\begin{equation}
\label{eq:def}
    \begin{bmatrix}
    u_0 & u_1 & u_2\\
    v_0 & v_1 & v_2
    \end{bmatrix}
    \begin{bmatrix}
    \mu_0\\
    \mu_1\\
    \mu_2
    \end{bmatrix} = \mathbf{0} \ \text{and} \ \mu_0 + \mu_1 + \mu_2 = 1,
\end{equation}
where $(\mu_0, \mu_1, \mu_2)$ are the barycentric coordinates, and $(u_i, v_i)$ denote the vector components at the vertices that constitute the cell. A critical point exists within the cell if and only if $0 \leq \mu_k \leq 1$ holds for all $k \in \{0, 1, 2\}$. Figure~\ref{fig:cpexample} illustrates examples of cells with and without a critical point, respectively. 

\begin{figure}[htb]
\centering
\vspace{-1mm}
\includegraphics[width=1.0\columnwidth]{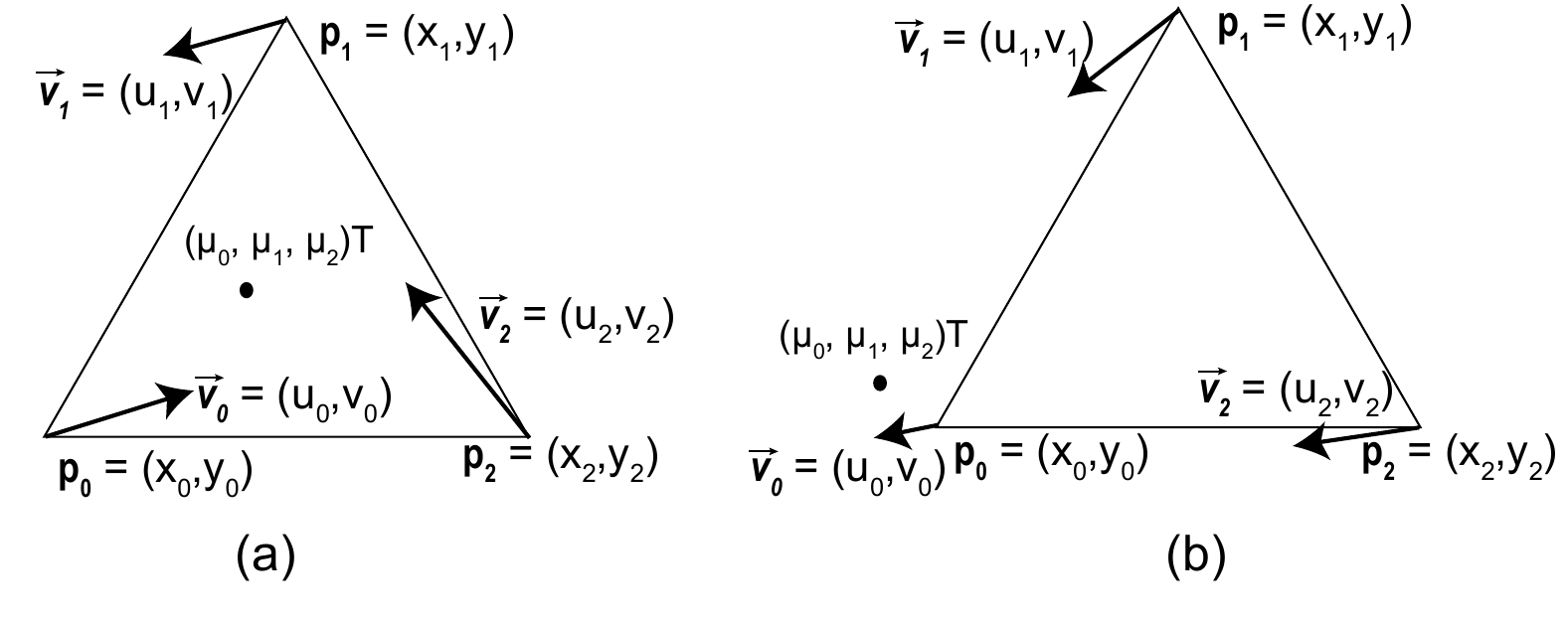}
\vspace{-9mm}
\caption{(a) Barycentric coordinates inside the cell, where a critical point exists; (b) barycentric coordinates outside the cell, where no critical point exists.}
\label{fig:cpexample}
\vspace{-3mm}
\end{figure}

\subsection{Space–Time Extrusion and Tetrahedralization}
\label{sec:bg:3-split}

Existing approaches~\cite{liang2020toward,xia2024preserving} cannot reliably maintain the correctness of critical point tracking paths, as they lack a unified mechanism that couples spatial and temporal domains. Since critical points evolve over space and time through events such as birth, death, split, and merge~\cite{fff}, any lossy compression performed without proper space-time alignment may produce artificial critical points within time slabs. This results in erroneous event identification and, ultimately, incorrect topological transitions.
Figure~\ref{fig:topochange} depicts the evolution of critical points under various topological events and highlights how potential spurious critical points may distort the underlying topological structure, resulting in incorrect event identification and topological changes.

\begin{figure}[htb]
\centering
\vspace{-3mm}
\includegraphics[width=0.95\columnwidth]{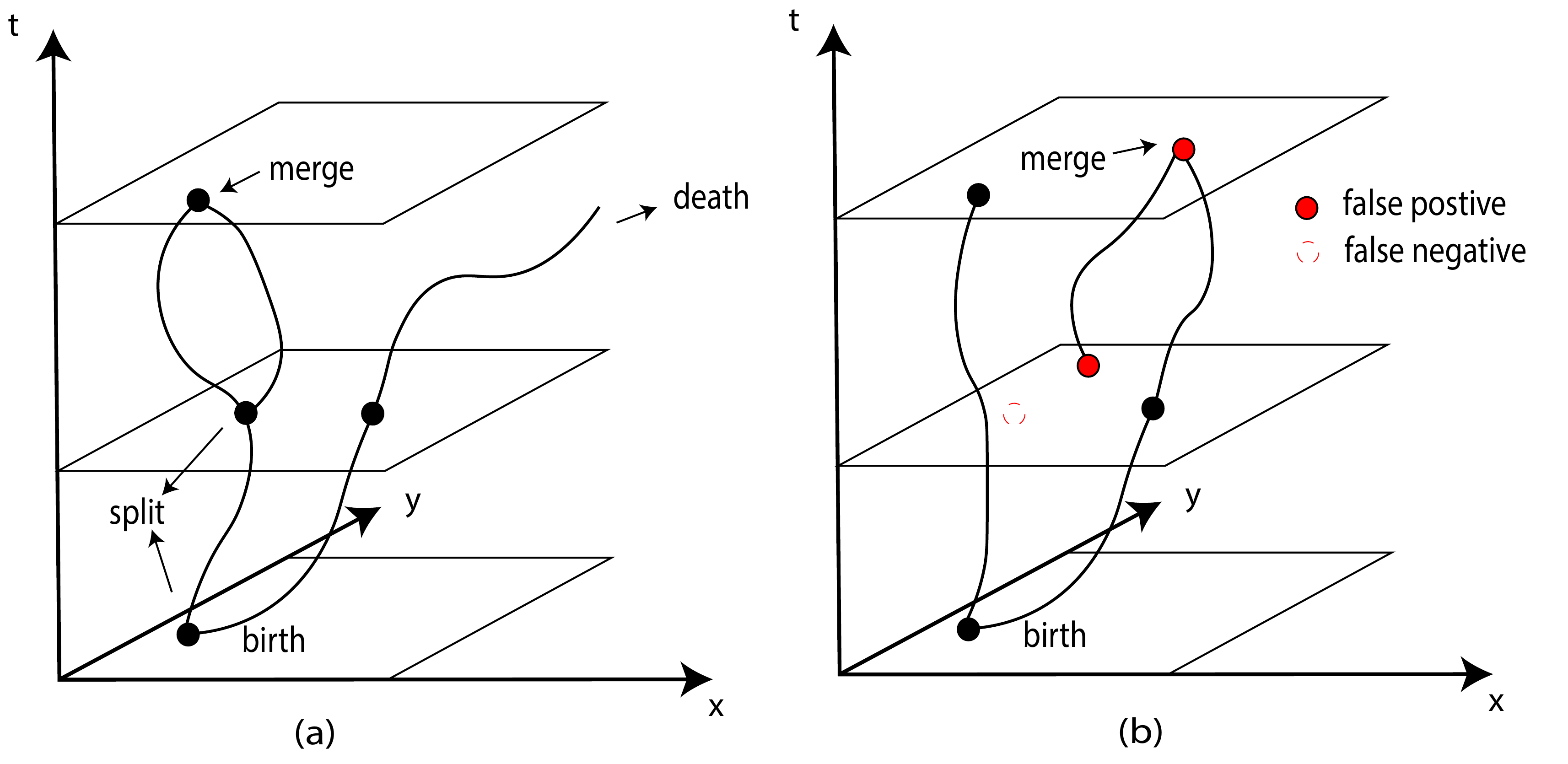}
\vspace{-3mm}
\caption{(a) illustrates the temporal evolution of a critical point and the associated events; (b) demonstrates how lossy compression alters the critical point, resulting in a deviation of its trajectory.}
\label{fig:topochange}
\end{figure}

To address this issue, we adopt the method proposed by De Loera et al.~\cite{tetrahedron_split}, which provides a unified representation of space and time by extruding the spatial domain along the temporal dimension, followed by a consistent space-time decomposition. At each time step $t_k$, we have a consistent triangular mesh $\Delta\in\mathcal{F}$.  
Then we extrude $\Delta$ over the temporal interval $[t_k, t_{k+1}]$ to form a triangular prism.  
This prism is then subdivided into three tetrahedra, denoted as $\tau_1$, $\tau_2$, and $\tau_3$, using a fixed three-tetrahedron splitting scheme: $\tau_1 = (\mathcal{V}_t^a,\, \mathcal{V}_t^b,\, \mathcal{V}_t^c,\, \mathcal{V}_{t+1}^c)$, $\tau_2 = (\mathcal{V}_t^a,\, \mathcal{V}_t^b,\, \mathcal{V}_{t+1}^b,\mathcal{V}_{t+1}^c)$, and $\tau_3 = (\mathcal{V}_t^a,\, \mathcal{V}_{t+1}^a,\, \mathcal{V}_{t+1}^b,\mathcal{V}_{t+1}^c)$.

Within any tetrahedron $\tau\subset D\times[t_k,t_{k+1}]$, the components $(u,v)$ are affine in $(x,y,t)$; hence the joint zero-set $\{\mathbf{V}=0\}$ is either empty or a straight line segment in $\tau$.
As illustrated in Figure~\ref{fig:tet-split}, this subdivision yields a refined and temporally coherent discretization of the data, ensuring a uniform topological structure across consecutive time steps. This consistency is crucial for the accurate detection and tracking of critical points in time-dependent vector fields.

\begin{figure}[htb]
\centering
\vspace{-2mm}
\includegraphics[width=0.95\columnwidth]{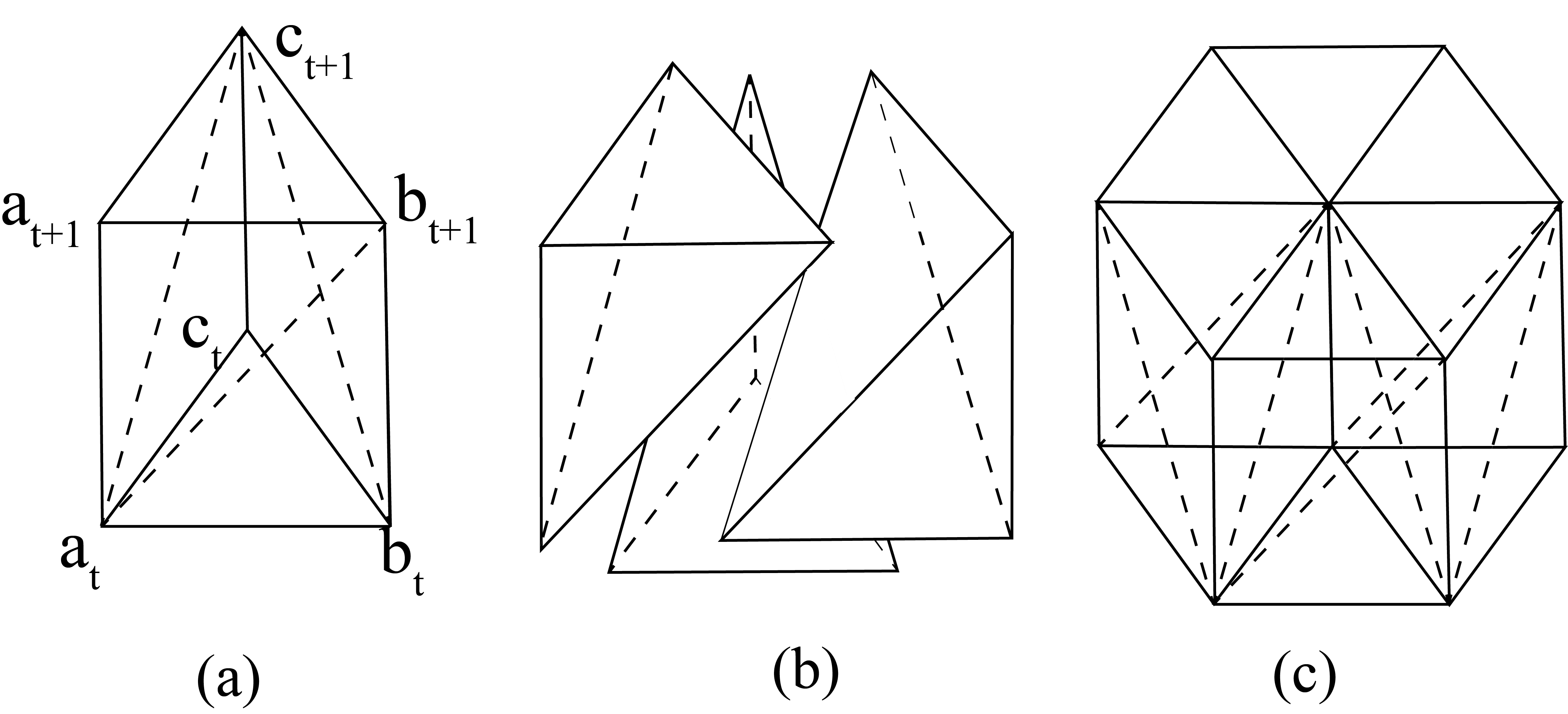}
\vspace{-3mm}
\caption{Space-time extrusion of simplicial mesh. (a) A 2D triangle is extruded along the temporal dimension to form a prism.
(b) A triangular prism is structurally subdivided into three tetrahedra.
(c) A 2D simplicial mesh is extruded into a 3D prismatic mesh and subsequently subdivided.}
\label{fig:tet-split}
\vspace{-4mm}
\end{figure}

\subsection{Critical-Point Trajectory}

A \emph{critical-point trajectory} over $[t_k,t_{k'}]$ is a polyline obtained by concatenating the line segments $\{\mathbf{V}=0\}$ across adjacent tetrahedra in $D\times[t_k,t_{k'}]$.

Under piecewise-linear assumption, the intersection $\{\mathbf{V}=0\}\cap (D\times\{t_k\})$ is governed by a \emph{face-level test}: for a space-time triangular face $f=\{i,j,k\}$, if the origin lies within the convex hull $\mathrm{conv}\{\mathbf{v}_i,\mathbf{v}_j,\mathbf{v}_k\} \subset \mathbb{R}^2$, then the zero set intersects $f$, yielding a slice critical point. 

\subsection{Face-Level Test}
\label{sec:problem:cp-detection}

Consider a triangular face $f$ of the space-time tetrahedral mesh. Let the three incident vertices carry vector values $\{\mathbf{a},\mathbf{b},\mathbf{c}\}\subset\mathbb{R}^2$. Because $\mathbf{V}$ is piecewise-linear, according to \eqref{eq:def}, the restriction of $\mathbf{V}$ to $f$ is the convex hull of $\{\mathbf{a},\mathbf{b},\mathbf{c}\}$ in $\mathbb{R}^2$ , which can be written as:
\[
\{\mathbf{V}|_f\} = \Bigl\{ \alpha\mathbf{a}+\beta\mathbf{b}+\gamma\mathbf{c} \;\big|\; \alpha,\beta,\gamma\ge 0,\; \alpha+\beta+\gamma=1 \Bigr\}.
\]
Define the $2$D scalar cross product $\det(\mathbf{x},\mathbf{y}) = x_x y_y - x_y y_x$.
Let
\[
D_f=\det(\mathbf{a},\mathbf{b})+\det(\mathbf{b},\mathbf{c})+\det(\mathbf{c},\mathbf{a}).
\]
$\operatorname{sign}(x,y,z)=
1$ if $x,y,z>0$,
$-1$ if $x,y,z<0$, and $0$ otherwise.
In general position, the origin $\mathbf{0}$ lies in $\mathrm{conv}\{\mathbf{a},\mathbf{b},\mathbf{c}\}$ iff
\begin{equation}
\label{eq:inside-test}
\operatorname{sign}\!\big\{\det(\mathbf{a},\mathbf{b}),\;
\det(\mathbf{b},\mathbf{c}),\;
\det(\mathbf{c},\mathbf{a})\big\} \neq 0,
\end{equation}
equivalently, the barycentric coefficients of the origin are
\begin{equation}
\label{eq:origin-bary}
(\alpha^\star,\beta^\star,\gamma^\star)=\frac{1}{D_f}\bigl(\det(\mathbf{b},\mathbf{c}),\ \det(\mathbf{c},\mathbf{a}),\ \det(\mathbf{a},\mathbf{b})\bigr),
\end{equation}
which satisfy $\alpha^\star,\beta^\star,\gamma^\star>0$ exactly when \eqref{eq:inside-test} holds.
If \eqref{eq:inside-test} is true, we say \emph{face $f$ is crossed by a critical point}. The point of crossing in the geometric triangular face (in $x\text{--}y\text{--}t$) is the same barycentric combination of the three face vertices.

\subsection{Simulation of Simplicity}

We adopt the \textit{Simulation of Simplicity (SoS)}~\cite{sos} paradigm to avoid ambiguous degeneracies in critical point detection.
Typical degeneracies include: 
(i) $\det(\cdot,\cdot)=0$ in Eq.~\eqref{eq:inside-test}; 
(ii) zero-line intersections exactly passing through a vertex or an edge; and 
(iii) multiple faces (three or more) being simultaneously intersected within one tetrahedron. 
SoS introduces consistent symbolic perturbations (e.g., lexicographic tie-breaking by vertex or time indices) that deterministically select a side, thereby ensuring that 
Lemma~\ref{lem:line} and Theorems~\ref{thm:per-frame}--\ref{thm:track} hold under all configurations. 
Algorithm~\ref{alg:robust_cp} outlines the robust inside-test under SoS for a 2D simplex.

\begin{algorithm}[htb]
\caption{Robust critical point detection in a 2D simplex}
\label{alg:robust_cp}
\renewcommand{\algorithmicrequire}{\textbf{Input:}}
\renewcommand{\algorithmicensure}{\textbf{Output:}}
\renewcommand{\algorithmiccomment}[1]{\hfill\textcolor{gray}{// #1}}
\begin{algorithmic}[1]
\Require $V = \{(v_0, id_0), (v_1, id_1), (v_2, id_2)\}$, where $v_i \in \mathbb{R}^2$ are vertex values and $id_i$ are their global indices 
\State $x \gets (0, 0)$
\State Extract coordinate matrix $P \gets [v_0, v_1, v_2]^\top$ and index array $idx \gets [id_0, id_1, id_2]$
\State $s \gets \textsc{Orient2\_SoS}(P, idx)$
\Comment{evaluate the sign of $3{\times}3$ determinant; flip sign if permutation parity is odd}
\For{$i \gets 0$ to $2$}
  \State $Y \gets P$; \quad $Y[i] \gets x$
  \State $id' \gets idx$
  \State $s_i \gets \textsc{Orient2\_SoS}(Y, id')$
  \If{$s_i \neq s$}
    \State \Return \textbf{false}
  \EndIf
\EndFor
\State \Return \textbf{true}
\end{algorithmic}
\end{algorithm}

\subsection{Derivation of Error Bounds for Critical-Point Preservation}

Previous work~\cite{xia2025tspsz} derived a sufficient but not necessary condition on the absolute error bound to avoid false positive cases (spurious critical points) in critical point preservation. The detailed algorithm is shown in Algorithm~\ref{alg:eb-derive}. We will utilize this derivation in Section~\ref{sec:method} to ensure that the proposed framework can preserve critical-point trajectories.

\begin{algorithm}[htb!]
\caption{Derivation of eb for critical point preservation}\label{alg:eb-derive}
\footnotesize
\renewcommand{\algorithmicrequire}{\textbf{Input:}}
\renewcommand{\algorithmicensure}{\textbf{Output:}}
\renewcommand{\algorithmiccomment}[1]{\hfill\textcolor{gray}{// #1}}
\begin{algorithmic}[1]
\Require $u_0,u_1,u_2,v_0,v_1,v_2$ \Comment{triangle vertices' vector components}
\Ensure $eb$ \Comment{absolute error bound to preserve critical point}
\State $M_0 \gets u_2 v_0 - u_0 v_2$, $M_1 \gets u_1 v_2 - u_2 v_1$
\State $M_2 \gets u_0 v_1 - u_1 v_0$, $M \gets M_0 + M_1 + M_2$
\If{$M = 0$}
  \State \Return $0$
\EndIf
\State $eb \gets \dfrac{|M|}{\,|u_1 - u_0| + |v_0 - v_1|\,}$ 
\If{$|u_1| + |v_1| \neq 0$}
  \State $eb \gets \mathrm{MINF}\!\left( eb,\; \dfrac{|u_1 v_2 - u_2 v_1|}{\,|u_1| + |v_1|\,} \right)$
\Else
  \State \Return $0$
\EndIf
\If{$|u_0| + |v_0| \neq 0$}
  \State $eb \gets \mathrm{MINF}\!\left( eb,\; \dfrac{|u_0 v_2 - u_2 v_0|}{\,|u_0| + |v_0|\,} \right)$
\Else
  \State \Return $0$
\EndIf
\Statex
\If{$\mathrm{same\_sign}(u_0,u_1,u_2)$}
  \State $eb \gets \mathrm{MAX}\!\left( eb,\; |u_2| \right)$
\EndIf
\If{$\mathrm{same\_sign}(v_0,v_1,v_2)$}
  \State $eb \gets \mathrm{MAX}\!\left( eb,\; |v_2| \right)$
\EndIf
\State \Return $eb$
\end{algorithmic}
\end{algorithm}

\section{Problem Formulation and Guarantees}
\label{sec:formulation}

In this section, we formally define the problem, specify the constraints that guide our compression design, and establish the theoretical guarantees for preserving critical-point trajectories under lossy compression.

\subsection{Research Problem and Goal}
 
Given a vector field $\mathbf{D}=\{\mathbf{u},\mathbf{v}\}$ and a user-specified error bound $\varepsilon$, a compression method $\mathcal{C}$ transforms $\mathbf{D}$ to the compressed data $\mathcal{C}(\mathbf{D})$, which can be decompressed to $\hat{\mathbf{D}} = \mathcal{D}(\mathcal{C}(\mathbf{D}))$ through a corresponding decompression method $\mathcal{D}$. 
Our goal is to maximize the compression ratio, which is defined by $\texttt{size}(\mathbf{D}) / \texttt{size}(\mathcal{C}(\mathbf{D}))$, while enforcing the following error constraints. 

\paragraph{Pointwise error constraint}
\begin{equation}
\label{eq:constraint}
\|\hat{\mathbf{D}} - \mathbf{D}\|_\infty \le \varepsilon;\nonumber
\end{equation}

\paragraph{Critical-point-trajectory constraint}
For each space-time triangular face $f=\{i,j,k\}$ (including both time-slice faces and cross-time-slab faces), we define a face-level critical-point predicate (hereafter referred to as the face predicate) 
\begin{equation}
\label{eq:pf}
P_f(\mathbf{d}_i,\mathbf{d}_j,\mathbf{d}_k)
:= \mathbb{I} \big(\mathbf{0}\in \mathrm{conv}\{\mathbf{d}_i,\mathbf{d}_j,\mathbf{d}_k\} \big).\nonumber
\end{equation}
We require
\begin{equation}
\label{eq:face-invariance}
\forall f:\quad
P_f(\widehat{\mathbf{d}}_i,\widehat{\mathbf{d}}_j,\widehat{\mathbf{d}}_k)
\ =\
P_f(\mathbf{d}_i,\mathbf{d}_j,\mathbf{d}_k).
\end{equation}
Under piecewise-linear interpolation and general position (or SoS), the face-level invariance
\eqref{eq:face-invariance} is equivalent to preserving the set of critical-point trajectories:
\begin{equation}
\label{eq:traj}
\mathcal{T}(\widehat{\mathbf{D}})\ =\ \mathcal{T}(\mathbf{D}),\nonumber
\end{equation}
where $\mathcal{T}(\cdot)$ denotes the \emph{track set}, i.e., the collection of space-time trajectories obtained by
connecting critical points across consecutive timesteps through their face crossings in the space-time mesh. 

\subsection{Guarantees}
\label{sec:problem:guarantees}

We now provide theoretical guarantees showing that the proposed constraints are sufficient to ensure the preservation of per-time critical points and their temporal trajectories.

\begin{lemma}[Zero-set in a tetrahedron]
\label{lem:line}
Under piecewise-linear interpolation in $x$--$y$--$t$, the set $\{\mathbf{V}=0\}$ restricted to any tetrahedron $\tau$ is either empty or a straight line segment whose endpoints lie on two distinct faces of $\tau$. In general position, intersections with the boundary are either $0$ or $2$ points.
\end{lemma}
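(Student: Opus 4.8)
The plan is to reduce the statement to two elementary facts: the zero-set of an affine map $\mathbb{R}^3\to\mathbb{R}^2$ is generically an affine line, and the intersection of a line with a compact convex polytope is either empty, a single point, or a segment.

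First I would fix a tetrahedron $\tau$ of the space--time mesh, with vertices $P_0,\dots,P_3\in\mathbb{R}^3$ in coordinates $\mathbf{x}=(x,y,t)$ carrying vector values $\mathbf{v}_0,\dots,\mathbf{v}_3\in\mathbb{R}^2$. Since the interpolation is piecewise-linear and $\tau$ is a $3$-simplex, each scalar component $u$ and $v$ restricted to $\tau$ equals the unique affine function matching the vertex data, say $u(\mathbf{x})=\langle\mathbf{a}_u,\mathbf{x}\rangle+c_u$ and $v(\mathbf{x})=\langle\mathbf{a}_v,\mathbf{x}\rangle+c_v$. Hence $\{\mathbf{V}=0\}\cap\tau=H_u\cap H_v\cap\tau$, where $H_u=\{u=0\}$ and $H_v=\{v=0\}$ are affine subsets of $\mathbb{R}^3$.

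Next I would classify $H_u\cap H_v$. In general position --- equivalently, after the symbolic SoS perturbation underlying Algorithm~\ref{alg:robust_cp} --- neither gradient $\mathbf{a}_u$ nor $\mathbf{a}_v$ vanishes, so $H_u$ and $H_v$ are genuine planes, and they are neither parallel nor coincident, so $\ell:=H_u\cap H_v$ is an affine line and $\{\mathbf{V}=0\}\cap\tau=\ell\cap\tau$. The degenerate configurations --- $u\equiv 0$ or $v\equiv 0$ on $\tau$, or $\nabla u\parallel\nabla v$ so that the common zero-set collapses to a plane or to all of $\tau$ --- are exactly the coincidences that SoS is built to break, so this generic picture is forced for every tetrahedron that actually occurs.

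Then I would invoke convexity: $\tau$ is compact and convex, so $\ell\cap\tau$ is a compact convex subset of a line, i.e.\ empty, a single point, or a segment $[\mathbf{p},\mathbf{q}]$ with $\mathbf{p},\mathbf{q}\in\partial\tau=\bigcup_{i=0}^{3}F_i$. A single point arises only when $\ell$ touches $\partial\tau$ tangentially --- through a vertex or along an edge without entering the interior --- which is once more a non-generic incidence removed by SoS. For the endpoint claim, if $\mathbf{p}$ and $\mathbf{q}$ both lay on one face $F_i$, convexity of $F_i$ would force the whole segment, hence $\ell$, into $\operatorname{aff}(F_i)$; but $\ell$ is generically not contained in any face-plane, a contradiction, so the two endpoints lie on distinct faces and $\lvert\ell\cap\partial\tau\rvert\in\{0,2\}$. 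The step I expect to demand the most care is not any single inequality but the bookkeeping of degeneracies: listing the finitely many polynomial coincidences among the vertex data that the \emph{generic} argument silently excludes, and checking that the SoS tie-breaking rule resolves each of them consistently with the line/segment conclusion --- this is what allows Lemma~\ref{lem:line}, and the theorems built on it, to hold unconditionally rather than only in general position.
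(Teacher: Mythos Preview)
The paper does not actually supply a proof of this lemma; it is stated in Section~\ref{sec:problem:guarantees} without argument, and the only justification appears earlier in Section~\ref{sec:bg:3-split} as the single sentence ``the components $(u,v)$ are affine in $(x,y,t)$; hence the joint zero-set $\{\mathbf{V}=0\}$ is either empty or a straight line segment in $\tau$.'' Your proposal is precisely the natural elaboration of that sentence --- two independent affine equations in $\mathbb{R}^3$ cut out a line, and a line meets a compact convex set in $\varnothing$, a point, or a segment --- so there is no methodological difference to report. Your treatment is correct, and in fact more careful than the paper: you make explicit the degenerate configurations (vanishing gradient, parallel gradients, line lying in a face-plane, tangential contact) and note that each is a measure-zero coincidence resolved by the SoS perturbation, which is exactly what the paper sweeps under the phrase ``in general position.''
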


\begin{theorem}[Per-time critical point preservation]
\label{thm:per-frame}
If the face-level critical-point predicate on every triangular face of the space-time mesh is invariant under compression, then for every time slice $t_k$, the critical point set on $D\times\{t_k\}$ is unchanged (same cardinality and locations via piecewise-linear barycentric mapping).
\end{theorem}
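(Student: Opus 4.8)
\emph{Proof plan.}
The plan is to reduce the claim to the face-level test of Section~\ref{sec:problem:cp-detection} restricted to the \emph{time-slice} faces, and then close the argument with the non-degeneracy guarantees of SoS. The first step is to observe that the restriction of the space--time tetrahedral mesh to the hyperplane $\{t=t_k\}$ is exactly the fixed spatial triangulation $\mathcal{M}=(\mathcal{V},\mathcal{E},\mathcal{F})$: in each incident slab $[t_{k-1},t_k]$ and $[t_k,t_{k+1}]$ the only faces of the $3$-tetrahedron split lying in $\{t=t_k\}$ are the original triangles of $\mathcal{M}$ (e.g.\ $\tau_1$ contributes the bottom triangle $(\mathcal V_t^a,\mathcal V_t^b,\mathcal V_t^c)$, $\tau_3$ the top triangle, and the remaining tetrahedra meet this time slice only in a lower-dimensional face). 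Hence every triangle $\Delta\in\mathcal{F}$ with vertices $i,j,\ell$, carrying the slice values $\mathbf{v}_i^k,\mathbf{v}_j^k,\mathbf{v}_\ell^k$, is a triangular face of the space--time mesh, so the hypothesis of invariance applies to it and gives $P_\Delta(\widehat{\mathbf{v}}_i^k,\widehat{\mathbf{v}}_j^k,\widehat{\mathbf{v}}_\ell^k)=P_\Delta(\mathbf{v}_i^k,\mathbf{v}_j^k,\mathbf{v}_\ell^k)$ for every $\Delta$.

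The second step is to build a bijection between the critical set on $D\times\{t_k\}$ and the index set $\{\Delta\in\mathcal{F}:P_\Delta(\mathbf{v}^k)=1\}$. Since $\mathbf{V}(\cdot,t_k)$ is affine on $\Delta$ with image $\mathrm{conv}\{\mathbf{v}_i^k,\mathbf{v}_j^k,\mathbf{v}_\ell^k\}$, the triangle contains a zero iff $\mathbf{0}$ lies in that hull, i.e.\ iff $P_\Delta=1$ by~\eqref{eq:inside-test}; and when this holds, the SoS-regularized orientation quantity $D_\Delta$ is (symbolically) nonzero, so the affine map is a homeomorphism onto its image and the zero is \emph{unique} and \emph{strictly interior} to $\Delta$, with barycentric coordinates given by~\eqref{eq:origin-bary}. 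The same deterministic SoS rule realized by Algorithm~\ref{alg:robust_cp} also forbids a zero from sitting on a shared edge or vertex, so no critical point is double-counted across faces, and since $\mathcal{F}$ covers $D$ every slice critical point is accounted for. This yields the claimed bijection, and because it is built from the \emph{identical} deterministic predicate it holds verbatim for the decompressed field $\widehat{\mathbf{V}}$.

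The final step is to combine the two: invariance of $P_\Delta$ on every time-slice face gives $\{\Delta:P_\Delta(\widehat{\mathbf{v}}^k)=1\}=\{\Delta:P_\Delta(\mathbf{v}^k)=1\}$, so the original and decompressed slices have equal numbers of critical points, each decompressed critical point lies in the same host triangle as its original counterpart, and its location is recovered by the barycentric map~\eqref{eq:origin-bary} evaluated on the decompressed vertices (its displacement from the original being moreover controlled through the point-wise error constraint, if a quantitative bound is desired). I expect the only delicate point to be the degeneracy bookkeeping --- establishing ``exactly one zero per positively-predicated triangle'' and ``no zero shared between faces'' so that the correspondence is a genuine bijection --- which is precisely what the consistent symbolic perturbation of SoS supplies; the remainder is a direct unwinding of the face-level characterization.
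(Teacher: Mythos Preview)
Your proof is correct and follows essentially the same approach as the paper's sketch: both reduce the claim to invariance of the face-level predicate on the time-slice faces and then read off cardinality and locations via the barycentric formula~\eqref{eq:origin-bary}. Your version is more explicit about the SoS bookkeeping (uniqueness of the zero in a positively-predicated triangle and no double-counting across shared edges) and argues directly on the 2D slice rather than through Lemma~\ref{lem:line}, but the underlying idea is identical.
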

\begin{proof}[Proof Sketch]
A critical point at time $t_k$ is precisely an intersection of the zero-set with a time-slice face lying in $D\times\{t_k\}$; by Lemma~\ref{lem:line}, such intersections are exactly the face crossings decided by the predicate in Sec.~\ref{sec:problem:cp-detection}. If the predicate outcome on all time-slice faces is preserved, the set of intersection points on each slice is identical; their in-face barycentric coordinates (hence geometric positions) are preserved by construction.
\end{proof}

\begin{theorem}[Track equivalence]
\label{thm:track}
Assume general position so that in every tetrahedron the zero-set is either empty or a single segment with endpoints on two faces. If face-level critical point decisions are preserved on all faces, then the adjacency of segments across shared faces is identical before and after compression, hence the reconstructed track graph (number of tracks, their connectivity, and combinatorial structure) is unchanged.
\end{theorem}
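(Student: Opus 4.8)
The plan is to argue combinatorially that the track graph is completely determined by the collection of face-level predicate outcomes, so that invariance of the predicate forces invariance of the graph. First I would set up the correspondence between the abstract track graph and the space--time mesh: the vertices of the track graph are exactly the face crossings (points where $\{\mathbf{V}=0\}$ meets a triangular face), and its edges are the zero-segments inside tetrahedra guaranteed by Lemma~\ref{lem:line}. Invoking Lemma~\ref{lem:line} under the general-position (or SoS) hypothesis, each tetrahedron $\tau$ contributes either no crossing or exactly one segment whose two endpoints lie on two distinct faces of $\tau$; moreover a face $f$ is an endpoint-carrying face of $\tau$ if and only if $P_f = 1$. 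Hence the local picture inside every $\tau$ --- which faces are ``hit'' and how the single segment connects them --- is a function of the triple (at most) of predicate values on the faces of $\tau$.

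Next I would establish the two structural facts that pin down the graph. (i) \emph{Vertex set:} by Theorem~\ref{thm:per-frame}'s underlying argument (crossings are exactly the faces with $P_f=1$), the set of face crossings is identical before and after compression, and each crossing's in-face barycentric coordinates --- hence its geometric location --- are preserved by construction of the predicate (Eq.~\eqref{eq:origin-bary} is a ratio of the same determinants whose signs are fixed, and the constraint pins the exact values). (ii) \emph{Edge set / adjacency:} within each tetrahedron the number of hit faces is $0$ or $2$ (Lemma~\ref{lem:line}), and when it is $2$ the unique segment joins those two crossings; since the two hit faces are the same faces before and after compression, the segment's endpoints --- i.e.\ the adjacency it induces on the vertex set --- are the same. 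Gluing over all tetrahedra that share a face, two segments in neighboring tetrahedra are concatenated precisely when they share a crossing on the common face; this concatenation rule depends only on the (preserved) crossing data, so the linking across shared faces is identical.

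Finally I would conclude that, since both the vertex set and the adjacency relation of the reconstructed track graph are unchanged, the graph itself --- its number of connected components (number of tracks), its connectivity, and its combinatorial structure (degrees, where births/deaths/merges/splits occur at higher-valence vertices) --- is unchanged; and by (i) the geometric embedding of the polylines coincides as well, giving $\mathcal{T}(\widehat{\mathbf{D}}) = \mathcal{T}(\mathbf{D})$. I expect the main obstacle to be the careful treatment of degenerate configurations --- a zero-segment passing through a mesh vertex or lying in a face, or three-or-more faces of one tetrahedron being simultaneously hit --- where Lemma~\ref{lem:line}'s ``exactly two faces'' dichotomy could fail; the resolution is to appeal to the SoS perturbation (Sec.~2.6), which deterministically and \emph{consistently} resolves every such tie using only vertex/time indices that are unaffected by the numerical values, so the same resolution is applied to $\mathbf{D}$ and $\widehat{\mathbf{D}}$ and the combinatorial equivalence is preserved verbatim.
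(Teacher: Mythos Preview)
Your proposal is correct and follows essentially the same approach as the paper's sketch: identify track-graph vertices with face crossings and edges with the unique per-tetrahedron zero-segments from Lemma~\ref{lem:line}, then argue that preserved face predicates fix the hit faces in each tetrahedron and hence the gluing across shared faces, yielding an isomorphic track graph. Your treatment is more explicit than the paper's three-sentence sketch (in particular your careful handling of degeneracies via SoS), though note that your claim that barycentric coordinates are ``pinned'' by the predicate alone is an overreach---the predicate fixes only signs, and exact positions are preserved in this framework because critical-point-incident vertices are stored losslessly---but this does not affect the combinatorial conclusion the theorem actually asserts.
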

\begin{proof}[Proof Sketch]
Within a tetrahedron, the two boundary intersections uniquely determine a segment; thus, preserving which faces are intersected ensures that the same segment is retained. Across tetrahedra, two segments connect if and only if they share the same intersection point on a common face. Since both the face predicate and the corresponding barycentric intersection on that face are preserved, the segment connectivity remains unchanged, yielding an isomorphic track graph. 
\end{proof}

According to Theorem~\ref{thm:per-frame} and Theorem~\ref{thm:track}, 
we can conclude that preserving all per-time critical points as well as all cross-time-slab critical points constitutes a sufficient condition for ensuring the preservation of critical-point trajectories.

\section{Design Overview}
\label{sec:method}

This section provides an overview of our framework and outlines how the proposed method preserves critical-point trajectories during lossy compression.

\begin{figure}[ht]
\centering
\vspace{-1.5em}
\includegraphics[width=1.0\columnwidth]{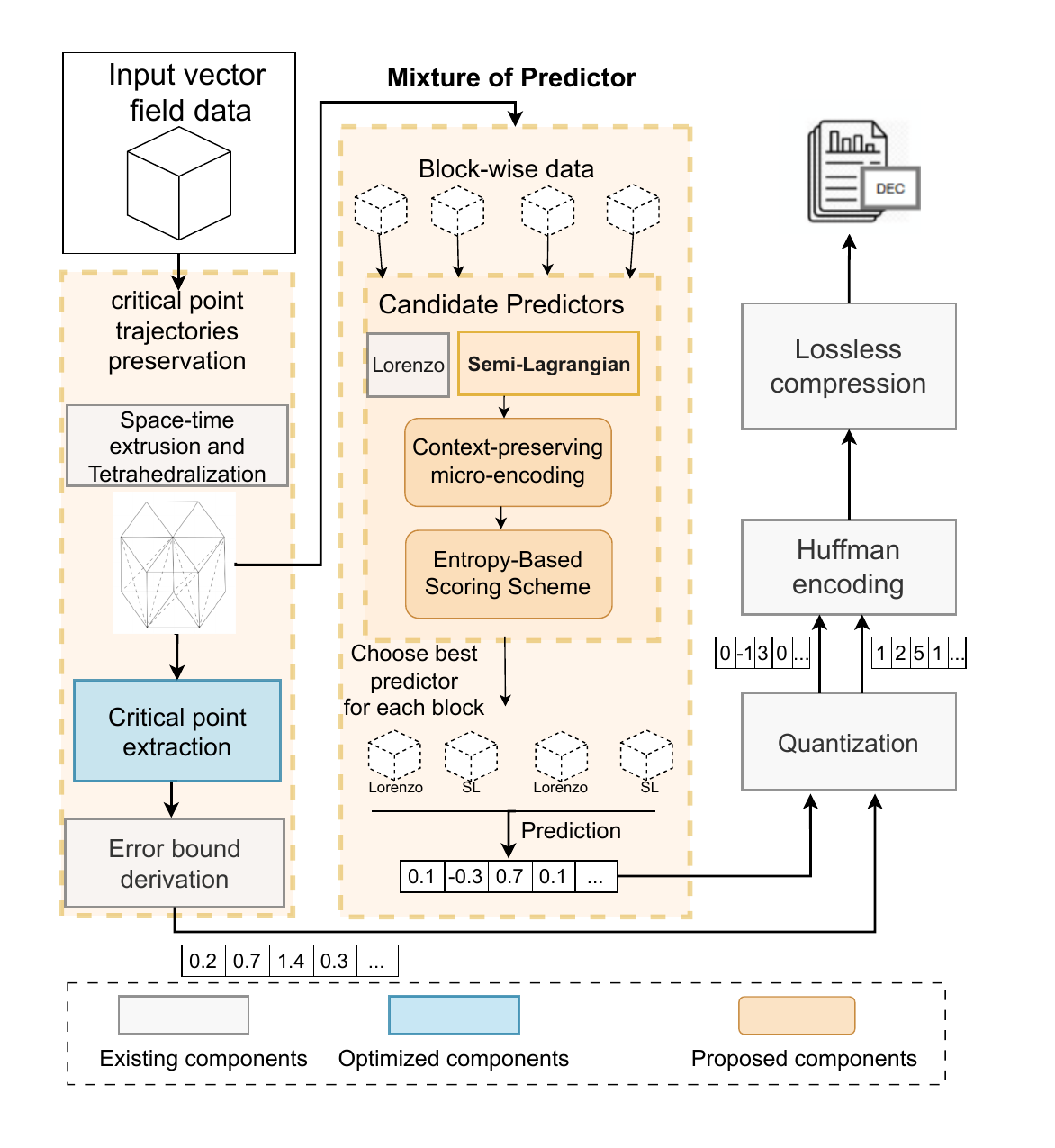}
\vspace{-3em}
\caption{Overview of the proposed framework.}
\label{fig:framework}
\end{figure}

{
\newcommand*{\tikzmk}[1]{\tikz[remember picture,overlay,] \node (#1) {};\ignorespaces}
\begin{algorithm}[htb!]
\caption{Lossy compression for 2D time-varying vector fields with critical-point trajectory preservation} \label{alg:compress2D}
\footnotesize
\renewcommand{\algorithmicrequire}{\textbf{Input:}}
\renewcommand{\algorithmicensure}{\textbf{Output:}}
\renewcommand{\algorithmiccomment}[1]{\hfill\textcolor{gray}{// #1}}
\begin{algorithmic} [1]
    \Require 2D Temporal Vector Fields $\{\mathbf{d}\} = \{\mathbf{U}, \mathbf{V}\}$ of size $N=H\!\times\!W\!\times\!T$; error bound $\epsilon$.
    \Ensure Compressed bytes.
        \State Allocate $U_{\mathrm{fp}},V_{\mathrm{fp}}\in\mathbb{Z}^{N}$ (int64)
        \State $\texttt{scale},U_{\mathrm{fp}},V_{\mathrm{fp}} \gets \texttt{Convert\_FixedPoint}(\mathbf{U}, \mathbf{V})$
        \State $\{\boldsymbol{\tau}\} \gets \texttt{Tetrahedralization}(\mathbf{U},\mathbf{V})$
        \State Initialize $\mathcal{C}$ 
        \For{$tet \in \boldsymbol{\tau}$}
            \For{$f \in tet$} 
                \State $\mathcal{C}.\text{insert}(\texttt{SoS\_robust\_test}(f))$ \Comment{Alg.\ref{alg:robust_cp}}
            \EndFor
        \EndFor
        \State $\epsilon' = \epsilon * \texttt{scale}$
        \State $Q_{\xi},\, Q_{d},\, \mathcal{L}_{d} \gets [\,]$ \Comment{initialize quantization for error bounds, quantization for data, and unpredictable data}

        \For{$t=0,\dots,T-1$}
            \For{$i=0,\dots,H-1$}
                \For{$j=0,\dots,W-1$}
                \State $v \gets VID(t,i,j)$
                    \State $\xi^v \gets \epsilon'$
                    \State $\xi^v, \mathcal{L}_d \gets \texttt{derive\_min\_eb}(v,U_\mathrm{fp},V_\mathrm{fp},\mathcal{C},\epsilon')$ \Comment{Alg.~\ref{alg:derive_eb}}
                    \State  $Q_\xi^v \gets \texttt{quantization}(\xi^v)$
                    \State $Q_d^v,\mathcal{L}_d \gets \texttt{Prediction\_and\_quant}(\mathbf{d}^v,\mathbf{\hat{d}}^v)$
                \EndFor
            \EndFor
        \EndFor
        \State $\texttt{output\_bytes} \gets \texttt{lossless\_comp}(Q_\xi\,Q_d,\mathcal{L}_d)$
        \State \Return $\texttt{output\_bytes}$
\end{algorithmic}
\end{algorithm}
}

{
\algdef{SE}[SUBALG]{Indent}{EndIndent}{}{\algorithmicend\ }
\algtext*{Indent}
\algtext*{EndIndent}

\newcommand*{\tikzmk}[1]{\tikz[remember picture,overlay,] \node (#1) {};\ignorespaces}
\begin{algorithm}[htb!]
\caption{Derive the minimum error bound for a vertex} \label{alg:derive_eb} \footnotesize
\renewcommand{\algorithmicrequire}{\textbf{Input:}}
\renewcommand{\algorithmicensure}{\textbf{Output:}}
\renewcommand{\algorithmiccomment}[1]{\hfill\textcolor{gray}{// #1}}
\begin{algorithmic} [1]
     \Require Vertex $v=\textsc{VID}(t,i,j)$; fixed-point fields $U_{\mathrm{fp}},V_{\mathrm{fp}}$; precomputed face set $\mathcal{C}$; initial bound $\epsilon'$
    \Ensure per-vertex bound $\xi^{(v)}$
    \State $\xi^{(v)} \gets \{\epsilon'\}$
    \State $\mathit{must\_lossless} \gets \mathrm{false}$
    \For{$c_i \in \mathrm{vertex\_cell}(v)$} \Comment{at most 36 triangular cells}
        \State $\triangle^t_{i,j,k} \gets \mathrm{get\_vertices}(v)$
        \If{$\triangle^t_{i,j,k} \in \mathcal{C}$}
            \State $\mathit{must\_lossless} \gets \mathrm{true}$;\ \textbf{break}
        \Else \State append $\mathrm{derive\_eb}(i,j,k)$ to $\xi^{(v)}$ \Comment{Alg.2}
        \EndIf
    \EndFor
        \If{$\mathit{must\_lossless}$}
            \State $\mathcal{L}_d.\mathrm{push}(v)$
            \State \Return $0$ \Comment{lossless store}
        \Else \State \Return $\min \xi^{(v)}$ \Comment{error bound to preserve trajectories}
        \EndIf

\end{algorithmic}
\end{algorithm}
}

\begin{figure*}[htbp]
    \centering
    \includegraphics[width=\linewidth]{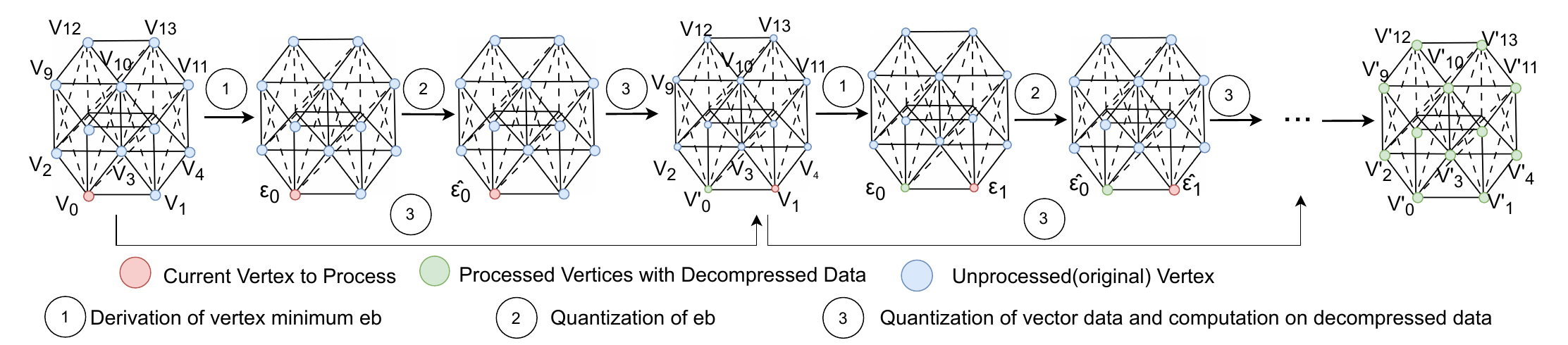}
    \vspace{-2em}
    \caption{An illustrative example of critical-point-trajectory-preserving lossy compression, as described in Algorithm~\ref{alg:compress2D} (lines~10--17).}
    \vspace{-1em}
    \label{fig:2p5d}
\end{figure*}

\subsection{Framework Overview}

As illustrated in Figure~\ref{fig:framework}, our compression pipeline is built on the prior work cpSZ-SoS~\cite{xia2024preserving}, with several key enhancements and modifications. 
The input data are fed into the space-time discretization module, as discussed in Section~\ref{sec:bg:3-split}, which enables our algorithm to handle time-varying 2D data by constructing a consistent spatiotemporal mesh representation. The framework records two components: (1) the critical-point-preserving error bounds associated with each vertex, and (2) the residuals between the original data and their predicted values. These two components are quantized separately and encoded using Huffman coding, followed by an additional lossless compression stage to further improve the overall compression ratio. 

For the critical-point-preserving error bound component, the key difference between our method and prior work lies in how temporal information is handled. By adopting a fixed tetrahedralization strategy, we explicitly consider all triangular faces formed between adjacent time slabs, which allows us to track how critical points enter and exit cells across consecutive time steps. As a result, our method preserves critical-point trajectories in time-varying data, extending previous approaches that were limited to static settings.

For the prediction component, prior work typically employs a single 2D Lorenzo predictor uniformly across the entire dataset. In contrast, our contributions are twofold. First, we introduce a novel semi-Lagrangian predictor, which effectively exploits flow coherence in advection-dominated regions and leads to improved compression performance. Second, recognizing that different regions of a dataset may exhibit distinct dominant characteristics, such as advection-dominated behavior or more locally smooth and diffusion-dominated behavior, we leverage a Mixture of Predictors (MoP) module. This module partitions the data into small blocks, evaluates the prediction performance of multiple predictors on each block, and selects the most suitable predictor accordingly. Further details of the prediction design and the MoP strategy are presented in Section~\ref{sec:mop}.

We present our critical-point-trajectory-preserving lossy compression in Algorithm~\ref{alg:compress2D}. We first convert the input data into a fixed-point representation so that the SoS method can be applied. The data are then tetrahedralized and partitioned according to a predefined strategy. Next, a critical point test is performed on all triangular faces of each tetrahedron, and the results are recorded. 
For each vertex, we derive an error bound that guarantees the preservation of critical points based on its incident triangular faces (up to 36 faces), and quantize the vertex value accordingly. Subsequently, prediction and quantization are applied to the vector data at this vertex. Then, we compute the decompressed data and replace the original value with it on the fly, ensuring that subsequent processing operates on previously decompressed results. In this way, the compression process is performed sequentially, and the processing of each data point explicitly depends on the decompressed values of preceding points. Finally, all quantized error bounds, quantized vector data, and unpredictable data that must be stored losslessly are packaged together and passed to a lossless compressor for final encoding.

Figure~\ref{fig:2p5d} provides a concrete example of this procedure, illustrating how each data point is processed in sequence and subsequently replaced by its decompressed value. The figure highlights the step-by-step dependency in the compression pipeline, where the reconstructed result of a previously processed point is immediately used as input for the processing of the next point, thereby ensuring consistency throughout the compression process.

\subsection{Time Complexity Analysis}
Denote $N$ as the total number of data points, and we analyze the complexity of our compression algorithm as follows. Our algorithm consists of three stages.
In the first stage, we perform a robust critical point test on all triangles incident to each vertex.
Each test has a time complexity of $O(m)$, where $m$ denotes the dimensionality ($m=2$ for the 2D datasets).
According to our prism decomposition strategy, each vertex is incident to at most 36 triangles.
Therefore, the complexity of this stage is $O(\alpha m N)$, where $\alpha = 36$ denotes the maximal number of triangular faces associated with a data point.
Since both $\alpha$ and $m$ are constants, this stage has an overall time complexity of $O(N)$.
As a variant of the SZ compressor family, prior work shows this stage has complexity $O(N)$~\cite{sz16}.
The third stage derives error bounds for preserving critical-point trajectories.
Similar to the first stage, it requires per-vertex analysis over incident triangles and therefore has complexity $O(\alpha N)=O(N)$.
Overall, the total time complexity of our algorithm is $O(N)$.
\section{Block-Wise Adaptive Mixture of Predictors}
\label{sec:mop}

Prediction plays a crucial role in improving compression efficiency, as accurate prediction effectively reduces the magnitude of residuals to be encoded, leading to a more compact entropy distribution and thus higher compression ratios. In particular, prior research~\cite{sz18,liu2025solarzip} has shown that block-based predictive coding effectively adapts to local data heterogeneity, which is crucial for scientific datasets exhibiting complex spatial and temporal patterns.

We focus on compressing 2D time-varying vector field data $\mathbf{v}(x,y,t) = (u,v)$. Numerical simulation data often exhibit strong local physical heterogeneity. In different subdomains, the dominant physical mechanisms such as advection, convection, nonlinear transport, and diffusion can vary significantly. Most state-of-the-art scientific lossy compressors~\cite{sz17,sz16,sz18,liang2018efficient} adopt the first-order Lorenzo predictor as their prediction model. We refer readers to~\cite{ibarria2003out} for further details. This model performs well in locally smooth or diffusion-dominated regions due to its trilinear reproduction property, but it produces large residuals in regions with significant translational or transport structures (for example, vortex-core migration or stripe advection), which reduces compression efficiency.

To address this limitation, we propose a \textbf{block-wise adaptive mixture of predictors (MoP)} with two candidates: \textbf{first-order 3D-Lorenzo (3DL)} predictor and  \textbf{semi-Lagrangian (SL)} predictor tailored for advection-dominated regions. 
We partition each frame at time $t$ into non-overlapping blocks $B$ of size $B_x\times B_y$. For each spatial block, we evaluate both predictors with a light-weight, context-preserving look-ahead, then select the mode that minimizes the estimated rate; this reduces residual entropy and improves overall compression ratio while strictly preserving the user-prescribed error bound.

\vspace{-0.5\baselineskip}
\subsection{Semi-Lagrangian Predictor}
\vspace{-0.2\baselineskip}

Let the spatial steps be $\Delta x$ and $\Delta y$, and the time step be $\Delta t$; these parameters are provided in the data metadata or descriptive information. Let $S$ denote the fixed-point scaling factor, such that the data are stored as integers scaled by $S$. 
Define per-axis \textbf{Courant–Friedrichs–Lewy (CFL)}-like factors, 
\[
\mathrm{CFL}_x=\frac{\Delta t}{\Delta x}\cdot\frac{1}{S},\qquad
\mathrm{CFL}_y=\frac{\Delta t}{\Delta y}\cdot\frac{1}{S}.
\]
At grid point $(i,j)$ we form a backward displacement using the previous velocity field $(u_{t-1},v_{t-1})$ and sample the previous frame at the departure point with bilinear interpolation:
\begin{align}
\big(i^\star,j^\star\big)
&=\Big(i - u_{t-1}(i,j)\,\mathrm{CFL}_x,\;
       j - v_{t-1}(i,j)\,\mathrm{CFL}_y\Big) \label{eq:sl-dep}\\
\widehat{f}^{\text{SL}}_{i,j,t}
&=\mathcal{I}_{\text{bil}}\!\left[f_{t-1}\right]\!\big(i^\star,j^\star\big) \label{eq:sl-sample}
\end{align}
where $\mathcal{I}_{\text{bil}}$ is standard bilinear interpolation at $(i_f,j_f)$ with clamping at domain boundaries.

Local flow velocities can vary significantly across blocks. When the velocity in the current block is large, the resulting displacement may cross multiple cells or even go out of bounds. This can lead to unstable time integration or large numerical errors. \revise{Therefore, we design two strategies to handle stable and unstable displacement regimes, respectively.}

We first estimate a dimensionless local displacement magnitude $d_\infty$ using the previous timestamp value. 
\[
d_\infty=\max\!\Big(\big|u_{t-1}(i,j)\big|\,\mathrm{CFL}_x,\;\big|v_{t-1}(i,j)\big|\,\mathrm{CFL}_y\Big).
\]

\noindent\textbf{RK2 (mid-point).}
When $d_\infty$ is smaller than a prescribed threshold $d_{\max}$ (set as 2 pixels), the displacement remains within a stable regime. In this case, we apply the second-order Runge–Kutta (mid-point) method to improve temporal accuracy:
{\small
\begin{align}
    &(i_{\frac{1}{2}},j_{\frac{1}{2}}) = \Big(i-\tfrac{1}{2}u_{t-1}(i,j)\,\mathrm{CFL}_x,\;
    j-\tfrac{1}{2}v_{t-1}(i,j)\,\mathrm{CFL}_y\Big), \label{eq:rk2-ij_half}\\
    &u_{t-1}(i_{\frac{1}{2}},j_{\frac{1}{2}}) = \mathcal{I}_{\text{bil}}\!\left[u_{t-1}\right]\!\big(i_{\frac{1}{2}},j_{\frac{1}{2}})\nonumber,
    v_{t-1}(i_{\frac{1}{2}},j_{\frac{1}{2}}) = \mathcal{I}_{\text{bil}}\!\left[v_{t-1}\right]\!\big(i_{\frac{1}{2}},j_{\frac{1}{2}})\nonumber\\
    &(i^\star,j^\star) = \Big(i- u_{t-1}(i_{\frac{1}{2}},j_{\frac{1}{2}})\,\mathrm{CFL}_x,\;
    j- v_{t-1}(i_{\frac{1}{2}},j_{\frac{1}{2}})\,\mathrm{CFL}_y\Big) \label{eq:rk2_ij_star}
\end{align}
}

\noindent\textbf{Adaptive substepping.}
When $d_\infty > d_{\max}$, a single-step integration may become unstable. To ensure robustness, we adopt an adaptive substepping strategy by splitting the displacement into $N = \min\!\bigl(\lceil d_\infty / d_{\max} \rceil,\, N_{\max}\bigr)$ explicit substeps, where $N_{\max}=32$ by default. Each substep is given by:
{\small
\begin{align}
(i_{s+1},j_{s+1})
&= (i_s,j_s)
   - \Big(
       u_{t-1}(i_s,j_s)\,\tfrac{\mathrm{CFL}_x}{N},\;
       v_{t-1}(i_s,j_s)\,\tfrac{\mathrm{CFL}_y}{N}
     \Big) \nonumber\\
& \text{where }s = 0,\ldots,N-1.
\label{eq:substeps}
\end{align}}
\noindent~with clamping to the data domain at each substep, and finally we have$(i^\star,j^\star)=(i_N,j_N)$. 
After we obtain $(i^\star,j^\star)$, use~\eqref{eq:sl-sample} to predict the $u$ or $v$ component.

\begin{figure}[htbp]
\centering
\includegraphics[width=1.0\columnwidth]{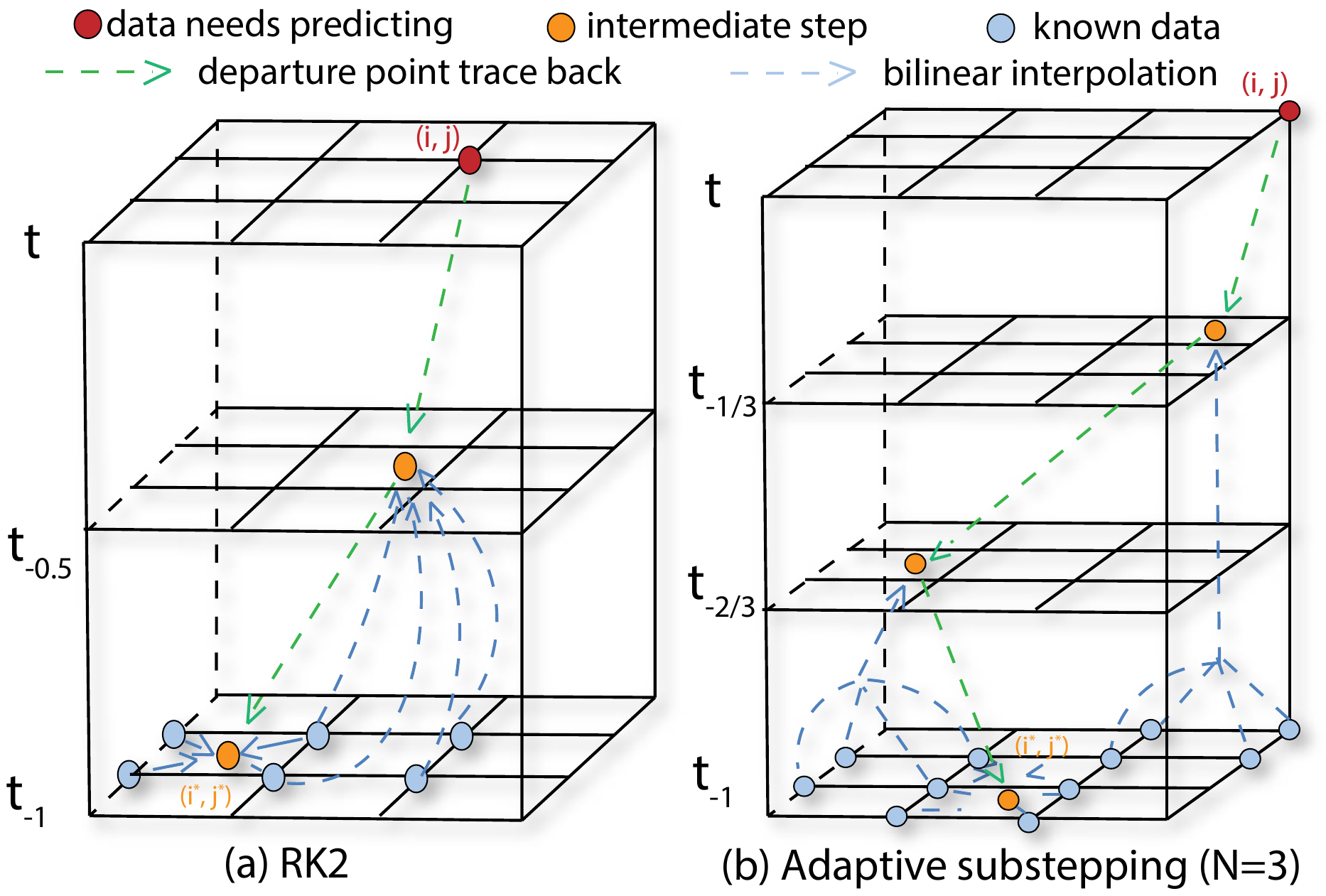}
\vspace{-\baselineskip}
\caption{RK2 and adaptive substepping method in semi-Lagrangian predictor. The green dashed arrows indicate the departure point traceback (Eq.~\ref{eq:sl-dep}), while the light blue dashed arrows represent the bilinear interpolation (Eq.~\ref{eq:sl-sample}).}\label{fig:sl}
\end{figure}

Figure~\ref{fig:sl} provides a schematic illustration of the two proposed prediction methods. The red point indicates the target location to be predicted. We first compute the coordinates of all intermediate steps (yellow points) using \eqref{eq:rk2-ij_half} or \eqref{eq:substeps}. Then, for each intermediate step, we perform bilinear interpolation of the components $u$ and $v$ at time $t-1$ from its four surrounding points to obtain the local velocity $(u,v)$ at the current intermediate position. This process is repeated iteratively until the final position $(i^\star, j^\star)$ is reached, where we again perform bilinear interpolation of the four neighboring points at time $t-1$ to obtain the predicted value.

\subsection{Data Sampling and Estimation}

\noindent\textbf{Context-preserving micro-encoding.} For each block $B$ and each mode $p$, we traverse every data point in $B$, compute prediction $\hat f^{(p)}$, form residual $r=x-\hat f^{(p)}$, and execute the quantization and dequantization process identical to that of the compressor. This procedure requires an error bound $\epsilon'$ that preserves critical points. In practice, $\epsilon'$ is typically tighter than the user-specified tolerance $\epsilon$. For optimistic estimation, we use $\epsilon$ in our calculations. After that, we write back the reconstructed values into a candidate buffer so that subsequent data see the same neighborhood as in true encoding. This guarantees the scoring respects the causal context of the actual compressor. To reduce scoring overhead, we collect statistics only on a subsampled grid with stride $n$ along both directions.

\noindent\textbf{Score computation.}
Let $h_p[k]$ be the histogram over all subsampled and valid quantized residuals from both components in $B$ under mode $p$, and let $n_p=\sum_k h_p[k]$ be the valid-sample count.
Let $\mathcal{L}_p$ be the number of overflow/failure events among the subsampled sites (these data needs to be compressed losslessly).
We compute the zero-order entropy (bits per sample) as
\begin{align}
H_0(h_p;n_p) \;=\; -\sum_{k}\frac{h_p[k]}{n_p}\,\log_2\!\Big(\frac{h_p[k]}{n_p}\Big),
\qquad n_p>0.
\label{eq:entropy}
\end{align}

The estimated bits per sample per component for mode $p$ combines entropy, an overflow penalty, and the amortized block-mode metadata:
\begin{align}
\widehat{R}_p(B)
\;=\;
H_0(h_p;n_p)
\;+\; \lambda\,\widehat{\xi}_p
\;+\; R_{\mathrm{meta}},\;\;
\widehat{\xi}_p \;=\; \frac{\mathcal{L}_p}{\,n_p+\mathcal{L}_p\,}.\nonumber
\end{align}

Here, $\lambda$ denotes a fixed penalty measured in bits. We set it to half the data type size, assuming that subsequent lossless compression achieves an average compression ratio of $2\times$~\cite{lindstrom2017error,son2014data}. Since only two modes are considered, the metadata overhead is one bit per block. 
We select the mode with the lower estimated score, subject to a relative-improvement criterion. Specifically, if the improvement of $\widehat{R}_p^{SL}(B)$ exceeds a switching threshold (set to $0.3\%$), we adopt the SL mode for that block; otherwise, we revert to 3DL to avoid mode thrashing caused by noisy estimates. 

\subsection{Hyperparameter Selection}
\begin{wrapfigure}[12]{r}{1.8in}
    \vspace{-0.1in}
    \includegraphics[width=\linewidth]{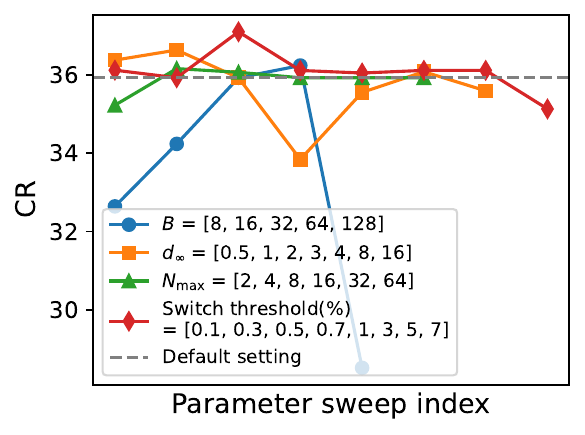}
    \vspace{-2em}
    \caption{\revise{Compression ratio with eb=5 on the SCF dataset obtained by sweeping individual hyperparameters while keeping all others at their default settings.}}
    \label{fig:parameter_sweep} 
    \vspace{-2em}
\end{wrapfigure}
Figure \ref{fig:parameter_sweep} illustrates the sensitivity of the compression ratio to key hyperparameters for MoP method. Block size $B$ depends on the spatial resolution of each time slice and the intrinsic complexity of the underlying flow structures. Smaller blocks can better capture fine-grained features in complex data but increase metadata overhead due to a larger number of blocks, which may reduce the overall compression ratio. A larger $d_{\text{max}}$ causes more data to use the RK2 instead of the adaptive substepping method when computing $(i^*,j^*)$. Since RK2 is less accurate, overly large values may degrade prediction quality and reduce compression efficiency. 

$N_{\text{max}}$ limits the maximum number of iterations in the adaptive substepping method, preventing excessive computational cost when large displacements require repeated step subdivision. When the SL method is the dominant predictor, an excessively large switch threshold may force blocks that should be predicted by SL to be predicted using 3DL instead, thereby resulting in a degradation in compression ratio. In practice, these parameters introduce typical accuracy–efficiency trade-offs; however, we observe that the method is not sensitive to their exact values within a reasonable range. Accordingly, we use fixed settings that provide stable performance across datasets.

\section{Evaluation}
\label{sec:evaluation}

In this section, we present a comprehensive evaluation of our proposed compression framework and compare it against state-of-the-art compressors.
\subsection{Experimental Setup and Datasets}

\begin{table}[ht]
\vspace{-4mm}
\centering
\caption{Benchmark datasets}
\label{tab:data}
\normalsize
\vspace{-1mm}
\resizebox{\columnwidth}{!}{%
{\color{black}
\begin{tabular}{|c|c|c|c|c|}
\hline
\multicolumn{1}{|l|}{\textbf{Dataset}} & \textbf{Dim(H,W,T)} & \textbf{Range}          & \textbf{\#CP} & \textbf{Size(GB)} \\ \hline
SCF                                    & $(450,200,500) $    & [-65.21, 54.11] & 348769        & 0.33     \\ \hline
CFVKV                                  & $(640,80,1501)$     & [-0.93, 1.83{]}   & 22893         & 0.57     \\ \hline
HCBA                                   & $(150,450,2001)$    & [-1.25, 1.02]   & 148501        & 1.01     \\ \hline
FS                                     & $(512,512,1001)$    & [-0.90, 0.83]   & 26861         & 1.95     \\ \hline
DT                                     & $(2048,2048,1001)$  & [-0.99, 0.96]   & 361086        & 31.28    \\ \hline
\end{tabular}
}
}
\end{table}

We conduct our experiments using five scientific datasets listed in Table~\ref{tab:data}, where \#CP denotes the total number of critical points located at time $t$, and within a time slab. All datasets~\cite{Jung93,gerrisflowsolver,Guenther17,Jakob20,Basilisk} are stored as single-precision floating-point values with the following detailed information.

\begin{itemize}
    \item \textbf{Synthetic Cylinder Flow (SCF)}: A synthetic vector field models the generation of a simple von Krman vortex street.
    
    \item {\bfseries \textrm{Cylinder Flow with von Kármán Vortex (CFVKV)}}: simulation of a viscous 2D flow around a cylinder.
    
    \item \textbf{Heated Cylinder with Boussinesq Approximation (HCBA):} A simulation of a 2D flow generated by a heated cylinder, using Boussinesq approximation.
    \item \textbf{Fluid Simulation Ensemble (FS)}: An ensemble simulation of 8000 unsteady 2D flows, with turbulence becoming more prominent in the later snapshots. Without loss of generality, we randomly pick the $6666^{th}$ flow dataset for evaluation.
    \item \textbf{Decaying Turbulence (DT)}: A 2D decaying turbulence dataset generated using the Basilisk solver with random band-limited initialization. 
    
\end{itemize}

We conduct the experiments on a medium-sized HPC cluster~\cite{mcc}. Each node is equipped with AMD EPYC 7702 CPUs and 512 GB of DDR4 memory, running CentOS~8.4 and GCC~9.3.

\subsection{Baselines}
To the best of our knowledge, apart from lossless compression, no existing lossy compressor is capable of preserving the critical-point trajectories in time-varying 2D vector field data. Since lossless compressors such as GZIP, ZSTD, and FPZIP inherently guarantee full topological fidelity, we use them as upper-bound baselines for comparison. In addition, we evaluate three representative lossy compressors, that is, ZFP, SZ3, and cpSZ-SoS, to assess how lossy techniques perform in preserving critical-point trajectories. All evaluated compressors have $O(N)$ time complexity.
 
\subsection{Metrics}
Our goal is to ensure that the critical points remain consistent before and after compression and to preserve the temporal coherence of their trajectories.
We employ the following metrics for quantitative evaluation:
\begin{itemize}
  \item \textbf{Compression Ratio (CR):} 
  CR measures the data reduction capability, as defined in Section~\ref{sec:formulation}.
  
  \item \textbf{Peak Signal-to-Noise Ratio (PSNR):}
  PSNR evaluates overall field fidelity in the value domain, which is defined as
$\text{PSNR} = 20 \cdot \log_{10}(\text{data range}) - 10 \cdot \log_{10}(\text{MSE})$. 

  \item \textbf{False Cases of Critical Points:}
  This metric quantifies the incorrect detection or correspondence of critical points, 
  including both false positives (spurious critical points) and false negatives (missing critical points).
  It is measured at two temporal scales:
  (a) spatial level at time $t$ ($FC_t$); (b) space-time level at time slab $[t_k,t_{k+1}]$ ($FC_s$).
  \item \textbf{Number of Trajectories:}
  The number of critical-point trajectories formed by connecting
critical points detected at each time step $t$ through their
correspondences across consecutive time slabs.
\end{itemize}

\begin{figure*}[hbtp]
\includegraphics[width=\linewidth]{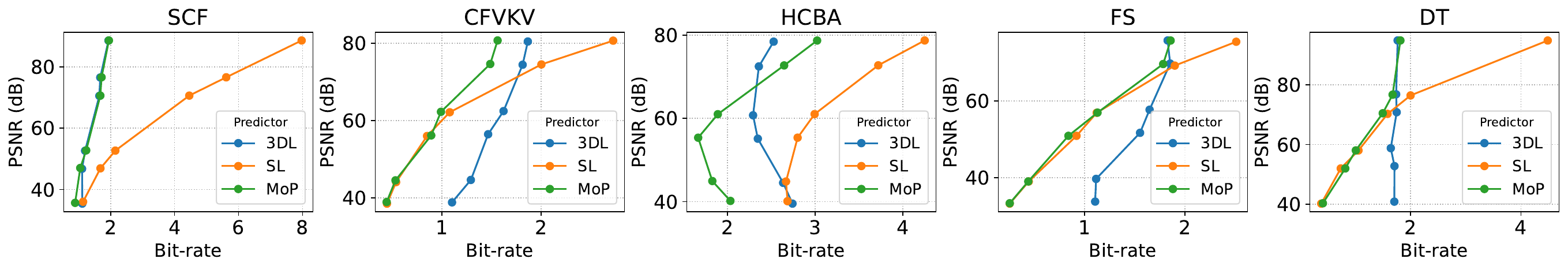}
\vspace{-2em}
\caption{Rate-distortion curves of the three methods on five datasets under varying error bounds. The x-axis denotes the bit rate (data type size in bits divided by the compression ratio), and the y-axis denotes PSNR. CFVKV, FS, and HCBA use [5e-4, 1e-3, 5e-3, 1e-2, 5e-2, 1e-1]; SCF uses [1e-2, 5e-2, 1e-1, 0.5, 1, 5]; and DT uses [1e-4, 5e-4, 1e-3, 5e-3, 1e-2, 5e-2]. At the same PSNR, a lower bit rate indicates better compression performance.}\label{fig:rate_distortion}
\vspace{-2mm}
\end{figure*}

\begin{figure}[hbtp]
\centering
\includegraphics[width=1\columnwidth]{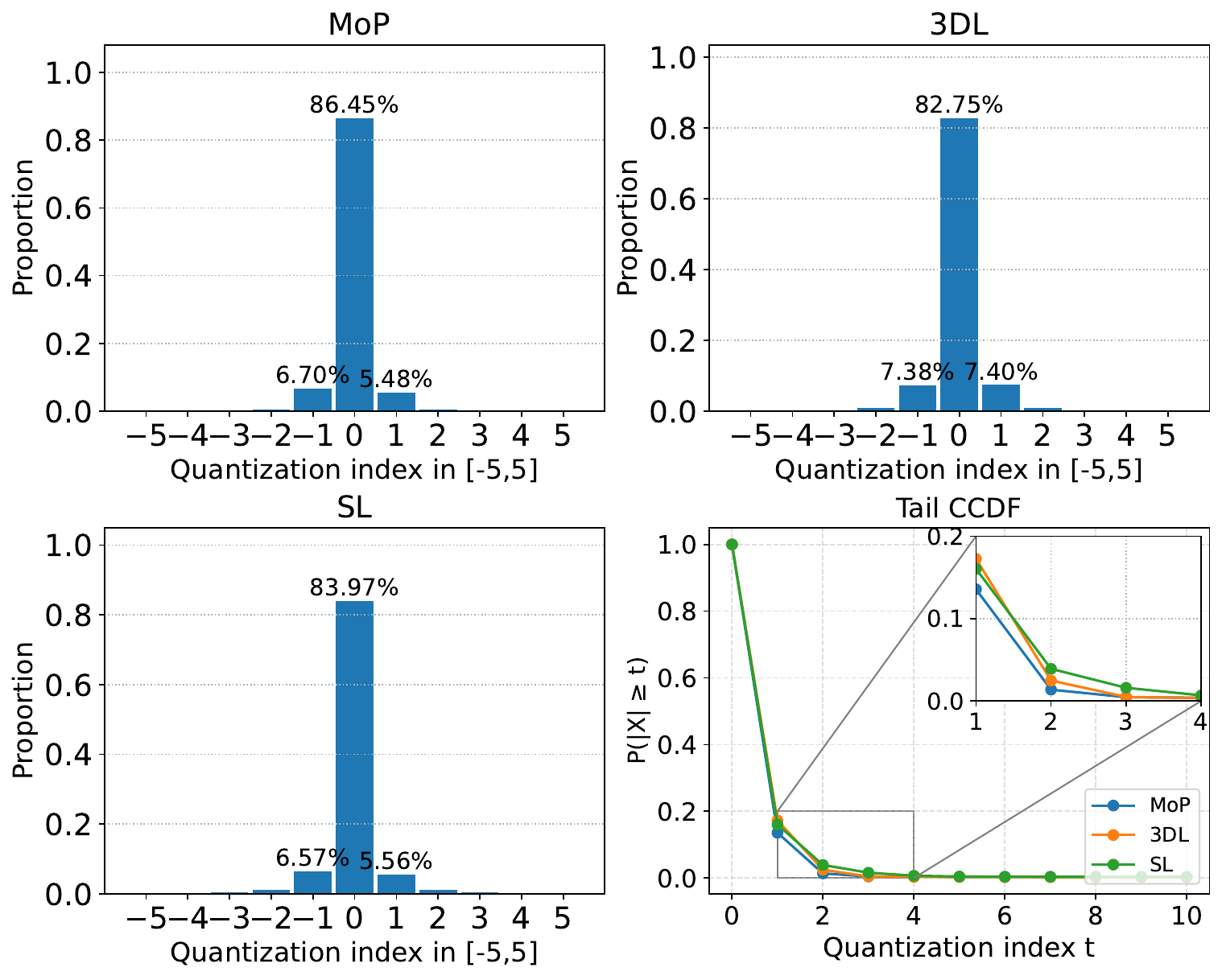}
\vspace{-5mm}
\caption{Probability mass function (PMF) and complementary cumulative distribution function (CCDF) of the quantization index under three methods on the SCF dataset at eb=5.}\label{fig:pmf}
\vspace{-3mm}
\end{figure}

\begin{figure}[hbtp]
\vspace{-0mm}
\centering
\includegraphics[width=1\columnwidth]{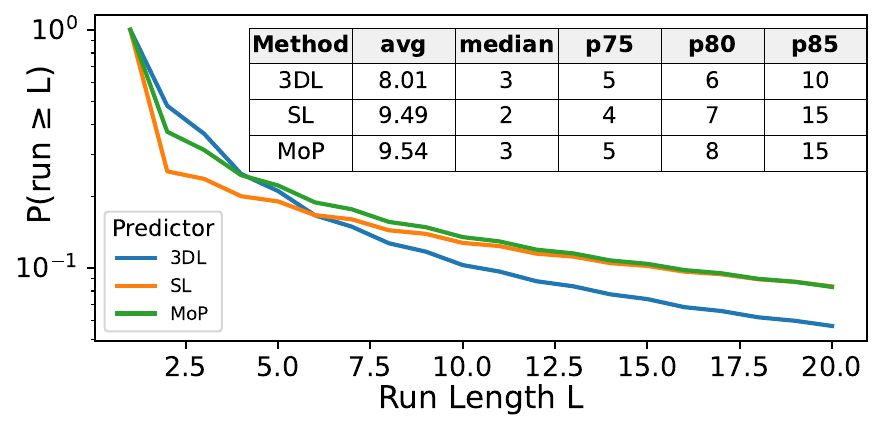}
\vspace{-8mm}
\caption{CCDF of run length between 0 and 20 on SCF dataset at eb=5.}\label{fig:ccdf_run}
\end{figure}

\subsection{Ablation Study on Rate-Distortion Performance}
Figure~\ref{fig:rate_distortion} presents the rate-distortion curves of our three methods across five datasets, explicitly illustrating the sensitivity to different error bounds. For each dataset, the error bounds are selected relative to the data range.   
We observe that, under the same PSNR, the SL method performs exceptionally well in terms of compression ratio on the CFVKV and FS datasets when the error bounds are relatively large, while the 3DL method achieves better compression ratios on the SCF and FS datasets. In contrast, the MoP method effectively combines the strengths of both approaches, achieving a compression ratio that either surpasses or closely matches the best of the other two methods in most cases.

\subsection{Encoding Efficiency}
\label{subsec:efficiency}

In our pipeline, two factors dominate the final bit rate: the fixed error bounds after quantization, $\{Q_\xi\}$, which guarantee critical-point tracking, and the distribution of the quantization indices, $\{Q_d\}$. The error bound is analytically derived from the requirement and is therefore not a tunable lever for further compression. Consequently, our effort focuses on shaping the index distribution via better prediction so that the subsequent entropy and lossless coders operate more efficiently. The rate gains of MoP come from two layers: (1) \textbf{Zero-order statistics.} Let $p(k)$ denote the PMF of the folded quantization index (central bins correspond to $|q|\le 1$ under our toward-zero mapping). The ideal per-sample cost of Huffman coding is $H_0(p)=-\sum_k p(k)\log_2 p(k)$. By concentrating probability mass near the center, MoP reduces $H_0(p)$ compared with 3DL and the SL predictor. In Figure~\ref{fig:pmf}, the MoP method exhibits a markedly sharper central peak than 3DL and SL. Meanwhile, the tail CCDF illustrates the relative proportions of data in both tails. It can be seen that MoP exhibits a smaller proportion beyond the range $[-3,3]$, suggesting a higher concentration of data around the center which directly translates into shorter Huffman codewords and a lower first-order rate. (2) \textbf{Higher-order structure.}  After entropy coding, the bytestream is fed to a backend lossless compressor (ZSTD), which exploits repetition and context (LZ77-style matching) beyond zero-order statistics. Figure~\ref{fig:ccdf_run} shows the complementary cumulative distribution function (CCDF) of the run lengths for the center symbol obtained from three different predictors: 3DL, SL, and MoP. The x-axis represents the run length $L$, while the y-axis denotes the probability $P(run \ge L)$, i.e., the proportion of continuous run length is greater than or equal to $L$. This figure provides a statistical view of how long the correctly predicted regions persist continuously across space and time. In other words, a fatter tail corresponds to more regular and persistent higher-order structures. According to the figure, we can observe that both SL and MoP are able to capture more longer run-lengths compared with 3DL. In addition, the upper-right panel of the figure shows the corresponding means and multiple percentiles for the three methods, which further confirm this observation. Since ZSTD leverages repeated substrings, a longer zero-run yields longer matches and thus additional savings on top of Huffman, further improving the overall compression ratio.

\begin{figure}[ht]
    \centering
    \includegraphics[width=1\linewidth]{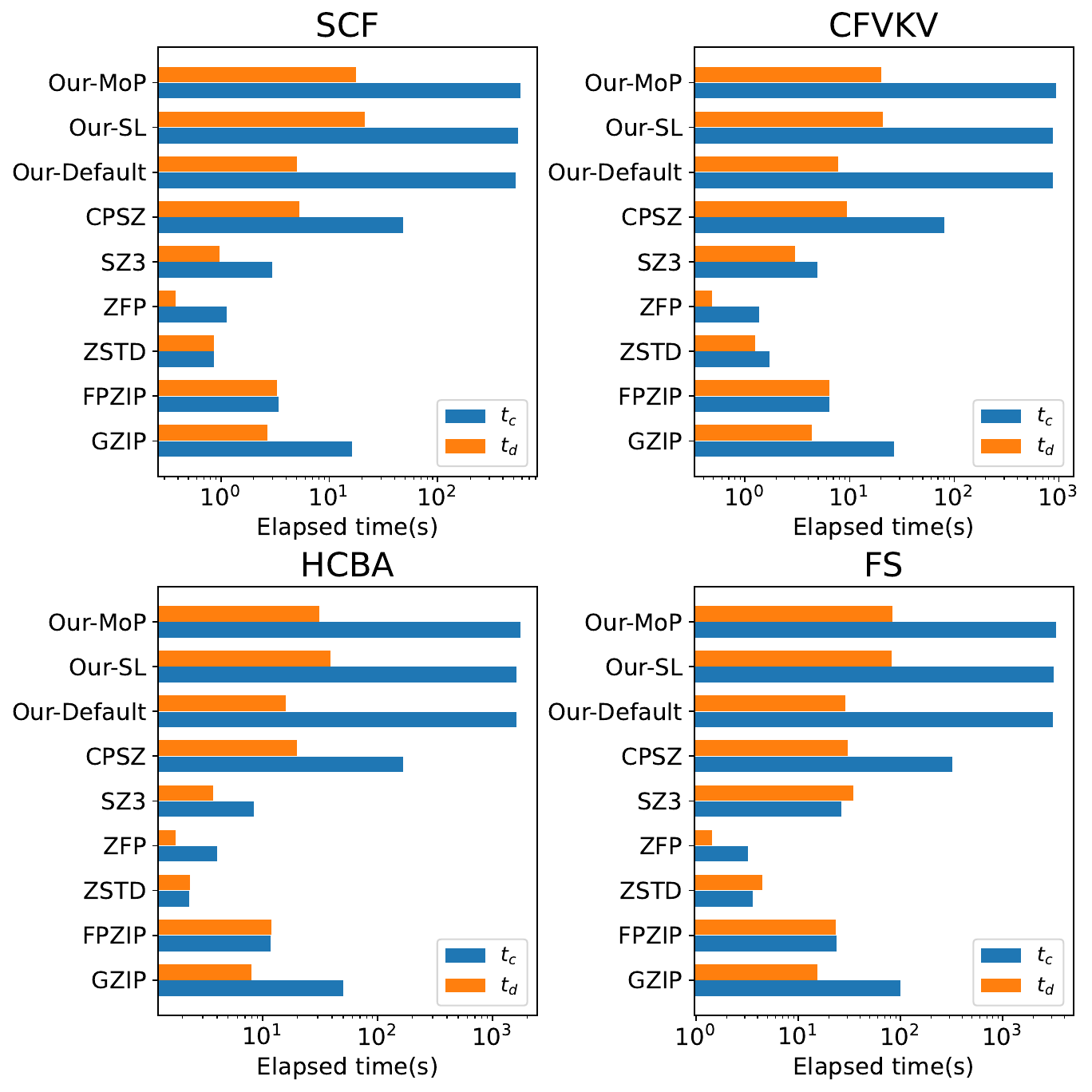}
    \vspace{-4mm}
    \caption{Compression and decompression time ($t_c$ and $t_d$) comparison under four datasets.}
    \label{fig:time_comparison}
    \vspace{-2mm}
\end{figure}

\begin{figure*}[htbp]
    \centering
    \includegraphics[width=\linewidth]
    {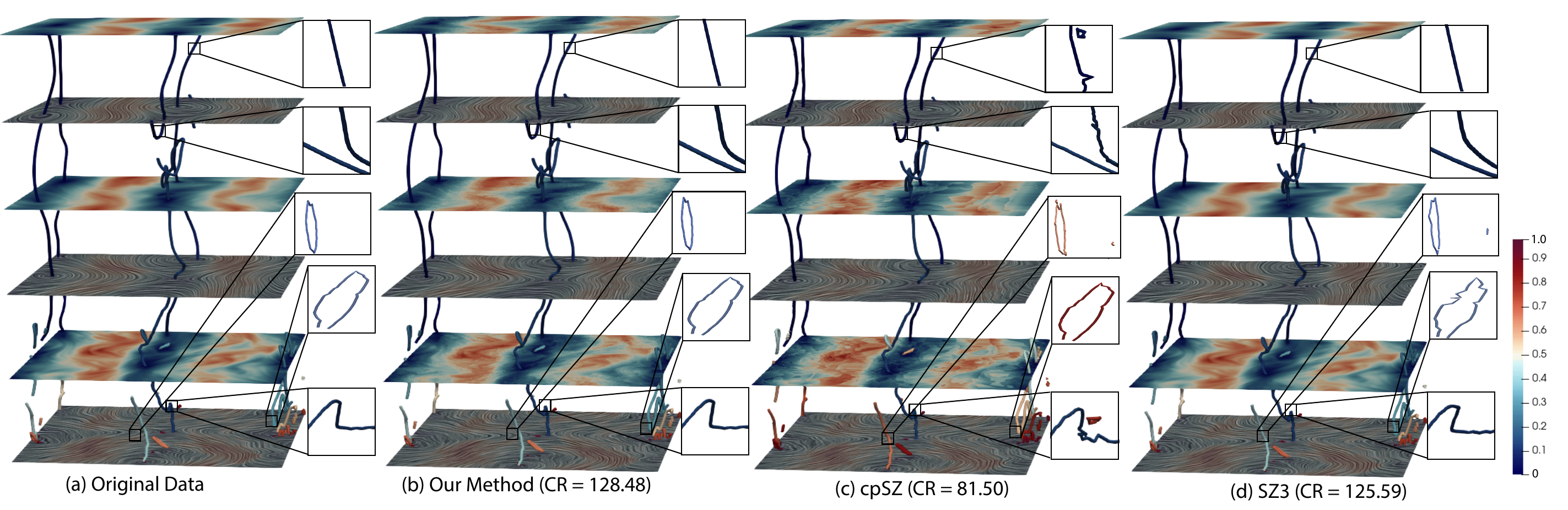}
    \vspace{-2em}
    \caption{Visualization of critical-point trajectories in the FS dataset.}
    \vspace{-2mm}
    \label{fig:FS}
\end{figure*}

\begin{table}[htbp]
\large
\centering
\caption{Quantitative results across all datasets}
\label{tab:all_datasets}
\vspace{-3mm}
\resizebox{\linewidth}{!}{
\begin{tabular}{|l|l|c|c|c|c|c|c|}
\hline
\textbf{Dataset} &
\textbf{Compressors} &
\textbf{Settings} &
\textbf{CR$_{\text{all}}$} &
\textbf{PSNR} &
\textbf{\#FC$_t$} &
\textbf{\#FC$_s$} &
\textbf{\#Traj} \\
\hline

\multirow{9}{*}{SCF}
 & GZIP        & \multirow{3}{*}{/}     & 1.10 & \multirow{3}{*}{/} & \multirow{3}{*}{0} & \multirow{3}{*}{0} & \multirow{3}{*}{999} \\
 & ZSTD        &                        & 1.10 &                    &  &  &  \\
 & FPZIP       &                        & 3.31 &                    &  &  &  \\
\cline{2-8}
 & ZFP         & $\epsilon=0.5$         & 34.74 & 82.63  & 128640 & 347383 & 4467 \\
 & SZ3         & $\epsilon=2$E-4        & 34.18 & 122.71 & 34166  & 89657  & 4399 \\
 & cpSZ(SoS)   & $\epsilon=5$           & 30.90 & 35.51  & 0      & 23486  & 2690 \\
\cline{2-8}
 & \multirow{3}{*}{Ours}  & Default $\epsilon=5$   & 28.71 & 35.43  & \multirow{3}{*}{0} & \multirow{3}{*}{0} & \multirow{3}{*}{999} \\
 &             & SL $\epsilon=5$        & 28.10 & 36.04  &  &  &  \\
 &             & MoP $\epsilon=5$       & 35.93 & 35.68  &  &  &  \\
\hline

\multirow{9}{*}{CFVKV}
 & GZIP        & \multirow{3}{*}{/}     & 1.32 & \multirow{3}{*}{/} & \multirow{3}{*}{0} & \multirow{3}{*}{0} & \multirow{3}{*}{37} \\
 & ZSTD        &   & 1.32 &  & &  &  \\
 & FPZIP       &   & 2.59 &  &  &  &  \\
\cline{2-8}
 & ZFP         & $\epsilon=0.4$         & 64.91 & 56.94  & 30274 & 75668  & 815 \\
 & SZ3         & $\epsilon=6$E-4        & 69.41 & 81.35  & 42116 & 133681 & 3825 \\
 & cpSZ(SoS)   & $\epsilon=0.1$         & 30.71 & 39.31  & 0     & 414    & 39 \\
\cline{2-8}
 & \multirow{3}{*}{Ours}  & Default $\epsilon=0.1$ & 29.04 & 38.84  & \multirow{3}{*}{0} & \multirow{3}{*}{0} & \multirow{3}{*}{37} \\
 &             & SL $\epsilon=0.1$      & 71.82 & 38.51  &  &  &  \\
 &             & MoP $\epsilon=0.1$     & 72.29 & 38.93  &  &  &  \\
\hline

\multirow{9}{*}{HCBA}
 & GZIP        & \multirow{3}{*}{/}     & 1.13 & \multirow{3}{*}{/} & \multirow{3}{*}{0} & \multirow{3}{*}{0} & \multirow{3}{*}{167} \\
 & ZSTD        &  & 1.10 &  &  &  &  \\
 & FPZIP       &  & 2.57 &  &  &  &  \\
\cline{2-8}
 & ZFP         & $\epsilon=1$E-3        & 14.31 & 96.57  & 76681  & 284941 & 8140 \\
 & SZ3         & $\epsilon=4$E-5        & 15.79 & 101.56 & 134907 & 446320 & 12110 \\
 & cpSZ(SoS)   & $\epsilon=5$E-3        & 15.14 & 61.08  & 0      & 9604   & 314 \\
\cline{2-8}
 & \multirow{3}{*}{Ours}  & Default $\epsilon=0.1$ & 11.68 & 39.50  & \multirow{3}{*}{0} & \multirow{3}{*}{0} & \multirow{3}{*}{167} \\
 &             & SL $\epsilon=0.1$      & 11.92 & 40.09  &  &  &  \\
 &             & MoP $\epsilon=0.1$     & 15.74 & 40.15  &  &  &  \\
\hline

\multirow{9}{*}{FS}
 & GZIP        & \multirow{3}{*}{/}     & 1.12 & \multirow{3}{*}{/} & \multirow{3}{*}{0} & \multirow{3}{*}{0} & \multirow{3}{*}{40} \\
 & ZSTD        &  & 1.11 &   &  &  &  \\
 & FPZIP       &  & 2.22 &   &  &  &  \\
\cline{2-8}
 & ZFP         & $\epsilon=4$           & 121.72 & 35.80 & 12362 & 51535 & 107 \\
 & SZ3         & $\epsilon=8$E-4        & 125.59 & 75.51 & 1847  & 8022  & 44 \\
 & cpSZ(SoS)   & $\epsilon=0.1$         & 81.50  & 34.06 & 0     & 4415  & 121 \\
\cline{2-8}
 & \multirow{3}{*}{Ours}   & Default $\epsilon=0.1$ & 28.91 & 33.77 & \multirow{3}{*}{0} & \multirow{3}{*}{0} & \multirow{3}{*}{40} \\
 &             & SL $\epsilon=0.1$      & 124.71 & 33.24 &  &  &  \\
 &             & MoP $\epsilon=0.1$     & 124.48 & 33.34 &  &  &  \\
\hline

\multirow{9}{*}{DT}
 & GZIP        & \multirow{3}{*}{/}  & 1.10 & \multirow{3}{*}{/} & \multirow{3}{*}{0} & \multirow{3}{*}{0} & \multirow{3}{*}{137} \\
 & ZSTD        &   & 1.10 &   &  &  &  \\
 & FPZIP       &   & 4.06 &   &  &  &  \\
\cline{2-8}
 & ZFP         & $\epsilon=5$E-3 & 30.90 &  89.60 &  3821210 & 863116 & 438486 \\
 & SZ3         & $\epsilon=1$E-6 & 30.36 & 132.97 & 422 & 72 & 137 \\
 & cpSZ(SoS)   & $\epsilon=5$E-3 & 11.78 & 58.82 & 0 & 218539 & 2205 \\
\cline{2-8}
 & \multirow{3}{*}{Ours}  & Default $\epsilon=5$E-3 & 19.48 & 58.81 & \multirow{3}{*}{0} & \multirow{3}{*}{0} & \multirow{3}{*}{137} \\
 &             & SL $\epsilon=5$E-3      & 30.26 & 58.03 &  &  &  \\
 &             & MoP $\epsilon=5$E-3     & 31.71 & 58.05&  &  &  \\
\hline
\end{tabular}
}
\end{table}

\subsection{Comparison with State-of-the-Art Methods}
We present the quantitative results for all datasets in Table~\ref{tab:all_datasets}. The results include an ablation study of the predictor component, in which we evaluate three configurations of our method. The default configuration employs the standard 3D Lorenzo predictor (3DL); the SL configuration replaces it with the proposed semi-Lagrangian predictor; and the MoP configuration adaptively selects between the two predictors.  
All reported experimental runtimes correspond to serial execution without any parallelization. To ensure a fair comparison, we adjusted the parameters of SZ3 and cpSZ so that their compression ratios are as close as possible to that of our method in the default (3DL) configuration. According to the results, since SZ3 does not provide any guarantee of preserving critical points, it consequently fails to ensure the correctness of critical-point trajectories. Although cpSZ can preserve the critical points at each individual time step, it cannot guarantee the consistency of critical points within a time slab, which leads to inconsistent trajectory counts and the emergence of false cases within the slab. Among all the lossless compression frameworks, FPZIP achieves the best performance, with a maximum compression ratio of $4.06$. The results of GZIP and ZSTD are comparable, with compression ratios ranging between $1.1 $ and $ 1.3$.

On the other hand, all three of our proposed methods effectively preserve critical-point trajectories. Among them, MoP consistently outperforms 3DL across all four datasets, with particularly notable improvements on the CFVKV and FS datasets, achieving $72.29\times$ and $124.48\times$ compression ratios, respectively. Furthermore, when using SL alone as the predictor, it surpasses 3DL on three of the datasets with a maximum improvement of 32\%, and performs comparably to 3DL on the SCF dataset. Overall, among all compressors capable of preserving critical-point trajectories, our method achieve up to $56.07\times$ in compression ratio compared to the state-of-the-art compressors across all datasets. 
We observe that in the SCF dataset, a large fraction of vertices have $d_{\infty}$ significantly exceeding $d_{\max}$. Specifically, 85.4\% and 69.0\% of the vertices trigger the adaptive substepping mechanism in SL and MoP methods, respectively, with an average of 4.1 and 3.7 substeps. No significant increase in runtime has been observed,  indicating that the adaptive substepping mechanism does not introduce significant overhead in practice.

Figure~\ref{fig:time_comparison} shows the compression and decompression time across all the compressors. ZSTD is the fastest among all lossless compressors, while ZFP achieves the highest speed among lossy compressors. Compared with cpSZ, our methods are more time-consuming because they aim to preserve trajectories, which requires examining up to 36 surrounding triangles for each point, including pre-computation of critical points and derivation of error bounds. Among our three proposed methods, 3DL and SL exhibit similar compression times, whereas MoP requires more time due to the additional block partitioning, sampling, and micro-encoding processes used for score calculation. For decompression time, 3DL and SL perform almost identically, while MoP incurs a slightly higher cost because it must switch predictors across different data blocks.

Although our algorithm is slower than some existing lossy compressors, it uniquely preserves the critical-point trajectories throughout the compression–decompression process, ensuring topological consistency in the reconstructed vector fields. Meanwhile, the framework is highly amenable to parallelization, which can substantially improve its performance. Moreover, our SL predictor achieves an even higher degree of parallelism than the traditional Lorenzo predictor, since it is embarrassingly parallel across all data within frame~$t$ without depending on any current-frame neighbors. Consequently, existing studies~\cite{cusz,cuszp,tian2021cuszplus,liu_tian_wu2024cuszi,songceresz,qiu2025} have demonstrated that prediction-based compressors can further accelerate throughput significantly on GPUs and other hardware accelerators. In addition, the decompression process in our framework is relatively fast, and since data are typically compressed once but decompressed multiple times, the additional compression overhead has little impact on practical usability.

\subsection{Memory Consumption}

\begin{wrapfigure}{r}{1.7in}
    \includegraphics[width=1\linewidth]{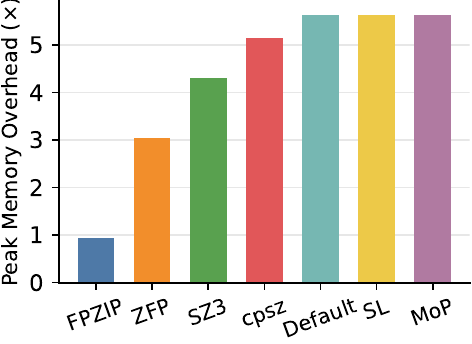}
    \vspace{-2em}
    \caption{Normalized peak memory usage of seven methods on average. Peak memory is normalized by the raw on-disk data size.}
    \label{fig:memory} 
    \vspace{-1em}
\end{wrapfigure}
Figure~\ref{fig:memory} reports the peak memory consumption during compression for our method and the compared compressors. GZIP and ZSTD are omitted since they are lossless, streaming-based compressors with low compression ratios and memory usage independent of data size. In contrast, predictor-based lossy compressors require additional working memory for data-aware prediction and auxiliary data structures. We therefore omit GZIP and ZSTD to focus on methods with comparable algorithmic characteristics. 

We further observe that FPZIP, ZFP, and SZ3 require approximately $1\times$, $3\times$, and $4\times$ the data size in memory, respectively, while both cpSZ-SoS and our method require slightly more than $5\times$. The additional memory usage of our method mainly stems from two factors: the use of int64 representations to support the SoS mechanism, which adds one extra data-size overhead, and an auxiliary hash map for maintaining critical-point information. We note that SoS is not inherently tied to this implementation choice and could alternatively be realized using arbitrary-precision libraries (e.g., GMP) with reduced storage overhead at the cost of higher computational expense. Our current design favors computational efficiency and simplicity in implementation.

\subsection{Qualitative Visualizations}
We visualize the trajectories of critical points in the original data and the decompressed data produced by our method (MoP), cpSZ, and SZ3 in Figure~\ref{fig:FS}. The error bound settings are kept consistent with those in Table~\ref{tab:all_datasets}.

Critical-point trajectories are extracted using the algorithm provided in FTK~\cite{ftk}. 
We overlay the line integral convolution (LIC) texture and the corresponding velocity field across time to display the critical-point trajectories. As shown in the visualizations, our method completely preserves the original critical-point trajectories. 
Both cpSZ and SZ3 distort certain trajectories in various ways, including but not limited to: (1) discontinuous trajectories accompanied by small oscillations, exhibiting zigzag patterns in the trajectories, which indicate slight positional deviations of critical points in the decompressed data; and (2) short-lived artifacts, manifested as ``small bubbles'' that suddenly appear, split and merge within a short time slab, where these critical points do not exist in the original data.

\section{Conclusion and Future Works}
\label{sec:conclusion}

In this paper, we address the problem of preserving critical-point trajectories during the lossy compression of 2D time-varying vector field data. We introduce a semi-Lagrangian–based predictor that achieves up to $4.31\times$ higher compression ratios than the Lorenzo predictor on advection-dominated datasets. Furthermore, we design a Mixture-of-Predictors module that adaptively selects the most suitable predictor for each local data block based on its spatial characteristics. This adaptive strategy effectively reduces residual entropy while maintaining high compression performance, achieving up to $56.07\times$ higher compression ratios than state-of-the-art lossless compressors in our experiments. 
In future work, we plan to further improve the efficiency of the framework, and to develop a GPU-accelerated implementation to achieve higher performance.

\section*{Acknowledgment}
This work was partially supported by grants from NSF OAC-2311756, OAC-2442627, OAC-2504255, OAC-2313122, OAC-2313123, OAC-2313124, OIA-2327266, and OAC-2311878.  
We thank the University of Kentucky Center for Computational Sciences and Information Technology Services Research Computing for its support and use of the Lipscomb Compute Cluster, Morgan Compute Cluster, and associated research computing resources.

\section*{AI-Generated Content Acknowledgment}
Portions of the text in this manuscript were refined using language-assistance tools such as Grammarly and ChatGPT (GPT-5). These tools were employed solely to improve grammar, clarity, and readability. All conceptualization, data analysis, and scientific conclusions are entirely the work of the authors.

\bibliographystyle{IEEEtran}
\bibliography{reference}

\end{document}